\theoremstyle{plain}
\newtheorem{theorem}{Theorem}[section]
\newtheorem{proposition}[theorem]{Proposition}
\newtheorem{lemma}[theorem]{Lemma}
\newtheorem{claim}[theorem]{Claim}
\theoremstyle{definition}
\newcommand{\Z}{\mathbb{Z}}
\newcommand{\cN}{\mathcal{N}}
\newcommand{\cP}{\mathcal{P}}
\DeclareMathOperator{\agr}{agr}
\DeclareMathOperator{\aHom}{aHom}
\DeclareMathOperator{\Alt}{Alt}
\DeclareMathOperator{\Aut}{Aut}
\DeclareMathOperator{\Eq}{Eq}
\DeclareMathOperator{\Hom}{Hom}
\DeclareMathOperator{\id}{id}
\DeclareMathOperator{\im}{im}
\DeclareMathOperator{\Perm}{Perm}
\begin{document}

\title{Group homomorphisms as error correcting codes}
\author{Alan Guo
\thanks{CSAIL, Massachusetts Institute of
Technology, 32 Vassar Street, Cambridge, MA, USA. {\tt aguo@mit.edu}. Research
supported in part by NSF grants CCF-0829672, CCF-1065125,
and CCF-6922462, and an NSF Graduate Research Fellowship}
}
\maketitle

\begin{abstract}
We investigate the minimum distance of the error correcting code formed by the
homomorphisms between two finite groups $G$ and $H$.
We prove some general structural
results on how the distance behaves with respect to natural group operations,
such as passing to subgroups and quotients, and taking products.
Our main result is a general formula for the distance when $G$ is
solvable or $H$ is nilpotent, in terms of the normal subgroup
structure of $G$ as well as the prime divisors of $|G|$ and $|H|$.
In particular, we show that in the above case, the distance is
independent of the subgroup structure of $H$.
We complement this by showing that, in general, the distance depends on the
subgroup structure $G$.

\end{abstract}

\section{Introduction}
\label{section:introduction}

\subsection{Error correcting codes}
The theory of error correcting codes studies \emph{codes}, which are subsets
of $\Sigma^n$ for some alphabet $\Sigma$ and block length $n$.
The distance between two strings of equal length is the number of
coordinates in which they differ. The distance $\Delta$ of a code is simply the
minimum distance between any pair of distinct codewords (elements of the code).
Hamming~\cite{hamming} identifies the distance of a code as the
key parameter measuring the error correcting capability of the code.
As long as the number of coordinates in which a codeword is corrupted is less
than $\Delta/2$, one can uniquely recover the original
codeword.
Elias~\cite{elias} and Wozencraft~\cite{wozencraft} proposed
\emph{list decoding}, in which one insists only on recovering a list,
whose size is at most polynomial in $n$, which contains the original codeword.
The Johnson bound~\cite{johnson1} shows that codes can list decode errors
beyond $\Delta/2$.
Codes with efficient list decoding algorithms include the
Hadamard code~\cite{GL89}, Reed-Solomon codes and variants
thereof~\cite{Sud97,GS99,GR08,Gur11}, Reed-Muller codes~\cite{GKZ08,Gop13},
multiplicity/derivative codes~\cite{Kop12,GW11}, and abelian
group homomorphisms~\cite{GKS06,DGKS08}. For some of these codes, in particular
for carefully chosen subcodes
the folded Reed-Solomon codes and multiplicity/derivative codes~\cite{DL12},
the Reed-Muller codes, and
abelian group homomorphisms, it was shown that for any constant $\epsilon > 0$
one can algorithmically list decode up to $\Delta - \epsilon n$ errors with a
constant list size, depending only on $1/\epsilon$.
For all of these codes, the codewords are interpreted as certain functions
$f:A \to B$ from some domain $A$ to codomain $B$. In this case, the coordinates
of the codeword are indexed by $A$ and the alphabet is~$B$.

In a companion work~\cite{GS14}, the author and Sudan show the analogous
list decoding results for group homomorphisms between supersolvable groups.
A technical obstacle which did not arise in the previous works
of~\cite{GKS06,DGKS08} on list decoding abelian group homomorphisms is
actually determining the distance of the code. This turns out to be a
nontrivial problem and serves as the primary motivation of this paper.

\subsection{Group homomorphisms}

Let $G$ and $H$ be finite groups, with homomorphisms $\Hom(G,H)$.
A function $\phi:G \to H$ is a (left) affine homomorphism if there exists
$h \in H$ and $\phi_0 \in \Hom(G,H)$ such that $\phi(g) = h\phi_0(g)$ for
every $g \in G$.
The set of left affine homomorphisms from $G$ to $H$ by $\aHom(G,H)$.
Note that the set of left affine homomorphisms equals the set of right affine
homomorphisms, since
\[
h\phi_0(g) = (h\phi_0(g)h^{-1})h
\]
and $\psi_0(g) \triangleq h\phi_0(g)h^{-1}$ is a homomorphism.

The \emph{equalizer} of two functions $f,g : G \to H$, denoted
$\Eq(f,g)$, is the set
\[
\Eq(f,g) \triangleq \{x \in G \mid f(x) = g(x)\}.
\]
More generally, if $\Phi \subseteq \{f:G \to H\}$ is a collection of functions,
then the \emph{equalizer} of $\Phi$ is the set
\[
\Eq(\Phi) \triangleq \{x \in G \mid f(x) = g(x)~~\forall f,g \in \Phi\}.
\]
In the theory of error correcting codes, the usual measure of distance between
two strings is the relative Hamming distance, which is the fraction of symbols
on which they differ. In the context of group homomorphisms, we find it
more convenient to study the complementary notion, the fractional agreement.
We define the \emph{agreement} $\agr(f,g)$ between two functions
$f,g:G \to H$ to be the quantity
\[
\agr(f,g) \triangleq \frac{|\Eq(f,g)|}{|G|}.
\]
The \emph{maximum agreement} of the code $\aHom(G,H)$, denoted by
$\Lambda_{G,H}$, is defined as
\[
\Lambda_{G,H} \triangleq
\max_{\substack{\phi,\psi \in \aHom(G,H) \\ \phi \ne \psi}}
\agr(\phi,\psi)
\]
In Section~\ref{section:equalizers}, we study the structure of the equalizers
of homomorphisms and prove some basic results that will be useful later.
As we will see (Proposition~\ref{proposition:lambda hom}),
adding affine homomorphisms does not change the distance of this code.
However, we include these functions in the code so that $\Lambda_{G,H}$ is
well-defined when $|\Hom(G,H)|=1$, as long as $H$ is nontrivial.

\subsection{Our results}

Our main result is the following formula for $\Lambda_{G,H}$ when
$G$ is solvable or $H$ is nilpotent.

\begin{theorem}
\label{theorem:main}
Let $G$ and $H$ be finite groups.
Define
\[
\mathcal{P}_{G,H} \triangleq
\{p \mid \text{$p$ is a prime divisor of $\gcd(|G|,|H|)$}\}
\]
and
\[
\mathcal{N}_{G} \triangleq
\{m \mid \text{$G$ has a proper normal subgroup of index $m$}\}.
\]
If $G$ is solvable or $H$ is nilpotent, then
\[
\Lambda_{G,H} =
\begin{cases}
0 & \text{if $\cP_{G,H} \cap \cN_{G} = \emptyset$}, \\
\frac{1}{\min \cP_{G,H} \cap \cN_{G}}
& \text{if $\cP_{G,H} \cap \cN_{G} \ne \emptyset$}.
\end{cases}
\]
\end{theorem}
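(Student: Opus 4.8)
The plan is to reduce everything to equalizers of genuine homomorphisms and then treat the two hypotheses separately. By Proposition~\ref{proposition:lambda hom} (together with the fact that two distinct constant maps agree nowhere) we may work with $\Hom(G,H)$ in place of $\aHom(G,H)$: $\Lambda_{G,H}=0$ if $|\Hom(G,H)|=1$, and $\Lambda_{G,H}=\max\{\agr(\phi,\psi):\phi\ne\psi\in\Hom(G,H)\}$ otherwise. For distinct $\phi,\psi\in\Hom(G,H)$ the map $\theta=(\phi,\psi)\colon G\to H\times H$ is a homomorphism and $\Eq(\phi,\psi)=\theta^{-1}(\Delta_H)$ is a \emph{proper} subgroup of $G$, of index $[\theta(G):\theta(G)\cap\Delta_H]$; thus $\agr(\phi,\psi)=1/[G:\Eq(\phi,\psi)]$. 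Granting this, the lower bound in the theorem is immediate and uses no hypothesis: given a prime $p\in\cP_{G,H}\cap\cN_G$, choose $N\triangleleft G$ of index $p$ and (by Cauchy, since $p\mid|H|$) an order-$p$ element of $H$; the composite $G\twoheadrightarrow G/N\cong\Z/p\hookrightarrow H$ is a homomorphism $\chi$ with $\ker\chi=N$, so $\agr(1,\chi)=1/p$. Dually, if $\cP_{G,H}\cap\cN_G=\emptyset$ then $\Hom(G,H)$ has no nontrivial element: the image of any nontrivial $\chi$ would be a nontrivial solvable group (a solvable quotient of $G$ if $G$ is solvable, a nilpotent subgroup of $H$ if $H$ is nilpotent), hence would have a quotient of prime order $q$, and then $q\in\cP_{G,H}\cap\cN_G$, since $q\mid|\chi(G)|$ divides $\gcd(|G|,|H|)$ and $G\twoheadrightarrow\Z/q$; so $\Lambda_{G,H}=0$. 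This disposes of the $\emptyset$ case and reduces the theorem to proving $[G:\Eq(\phi,\psi)]\ge p:=\min(\cP_{G,H}\cap\cN_G)$ for all distinct $\phi,\psi\in\Hom(G,H)$, whenever the intersection is nonempty.

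Suppose first that $H$ is nilpotent, and write $H=\prod_{q\mid|H|}H_q$ as the direct product of its Sylow subgroups. Using $\Hom(G,H_1\times H_2)=\Hom(G,H_1)\times\Hom(G,H_2)$ and $\Eq((\phi_1,\phi_2),(\psi_1,\psi_2))=\Eq(\phi_1,\psi_1)\cap\Eq(\phi_2,\psi_2)$, one reads off $\Lambda_{G,H}=\max_q\Lambda_{G,H_q}$, so it suffices to treat $H$ a $q$-group. In that case $\theta(G)\le H\times H$ is itself a $q$-group, so its proper subgroup $\theta(G)\cap\Delta_H$ has index divisible by $q$; hence $q\mid[G:\Eq(\phi,\psi)]$ for every distinct pair, giving $\Lambda_{G,H}\le1/q$. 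Since $\cP_{G,H_q}\subseteq\{q\}$, comparison with the lower bound gives the exact value of $\Lambda_{G,H_q}$, and taking the maximum over $q$ recovers the stated formula.

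Suppose now that $G$ is solvable; I would induct on $|G|$, the content being the bound $[G:K]\ge p$ for $K=\Eq(\phi,\psi)$ with $\phi\ne\psi\in\Hom(G,H)$. Let $N=\bigcap_{g\in G}gKg^{-1}$ be the normal core of $K$. If $N\ne1$, then $\phi,\psi$ induce \emph{distinct} homomorphisms $\bar\phi,\bar\psi\colon G/N\to H$ with $\Eq(\bar\phi,\bar\psi)=K/N$; moreover $\cP_{G/N,H}\cap\cN_{G/N}\subseteq\cP_{G,H}\cap\cN_G$ (normal subgroups of $G/N$ lift to normal subgroups of $G$ of the same index, and $|G/N|$ divides $|G|$), and this set is nonempty (else $\bar\phi=\bar\psi$, by the inductive hypothesis applied to the solvable group $G/N$), so the inductive bound gives $[G:K]=[G/N:K/N]\ge\min(\cP_{G/N,H}\cap\cN_{G/N})\ge p$. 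If $N=1$, then $\ker\theta\subseteq N=1$, so $G$ embeds in $H\times H$; in particular every prime dividing $|G|$ divides $|H|$, i.e.\ $\cP_{G,H}$ is precisely the set of prime divisors of $|G|$, and $G$ acts faithfully and transitively on the $[G:K]$ cosets of $K$. Choosing a maximal subgroup $K\le M<G$, the group $\bar G:=G/\mathrm{Core}_G(M)$ is a solvable \emph{primitive} permutation group of degree $[G:M]\le[G:K]$, hence of affine type by the classical structure theorem: $\bar G=V\rtimes G_0$ with $V\cong\F_q^a$ regular, $G_0\le\mathrm{GL}_a(\F_q)$ irreducible, and $[G:M]=q^a$. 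A short case analysis on whether $G_0$ centralizes $V$ now yields a prime $r\le q^a$ with $\bar G\twoheadrightarrow\Z/r$: if it does, primitivity forces $V\cong\Z/q$ and $r=q$ works; if not, irreducibility gives $V\le[\bar G,\bar G]$, so $\bar G^{\mathrm{ab}}\cong G_0^{\mathrm{ab}}\ne1$ and any prime $r$ dividing $|G_0^{\mathrm{ab}}|$ divides $|\mathrm{GL}_a(\F_q)|$, hence equals $q$ or divides some $q^i-1$ with $i\le a$, so $r\le q^a$ in all cases. Pulling back along $G\twoheadrightarrow\bar G$ gives $r\in\cN_G$, and $r\in\cP_{G,H}$ since $r\mid|G|$; hence $p\le r\le[G:M]\le[G:K]$. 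Either way $[G:K]\ge p$, so $\Lambda_{G,H}\le1/p$, matching the lower bound.

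I expect the subcase $N=1$ of the solvable induction to be the main obstacle: core-freeness of $K$ is not inherited by quotients, so the induction cannot simply be continued, and one seems genuinely to need the structure of primitive solvable permutation groups (equivalently, of irreducible solvable linear groups over $\F_q$) in order to exhibit a normal subgroup of $G$ whose prime index is at most $[G:K]$. The remaining ingredients — Proposition~\ref{proposition:lambda hom}, the Sylow decomposition of $H$, and the applications of Lagrange's and Cauchy's theorems — are routine.
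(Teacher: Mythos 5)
Your overall architecture is sound, and most of it is fine: the reduction to $\Hom$ via Proposition~\ref{proposition:lambda hom}, the Cauchy-type lower bound, the treatment of the case $\cP_{G,H}\cap\cN_G=\emptyset$, and the nilpotent-codomain case via the Sylow decomposition of $H$ all match the paper's argument in substance. The genuine problem is the inductive reduction in the solvable case. You set $N=\bigcap_{g\in G}gKg^{-1}$, the normal core of the \emph{equalizer} $K=\Eq(\phi,\psi)$, and assert that $\phi,\psi$ induce homomorphisms $\bar\phi,\bar\psi\colon G/N\to H$. That requires $N\le\ker\phi\cap\ker\psi$, which is false in general: elements of $K$ satisfy $\phi(x)=\psi(x)$, not $\phi(x)=1$. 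Concretely, take $G=H=\Z_4$, $\phi=\id$, $\psi(x)=-x$; then $K=N=\{0,2\}$ but $\phi(2)\ne 0$, so nothing factors through $G/N$ and the step collapses already on this instance. The fix is to quotient instead by $N'=\ker\phi\cap\ker\psi$ (which is normal and contained in $K$, so $\Eq(\bar\phi,\bar\psi)=K/N'$ and indices are preserved); the base case then becomes $\ker\phi\cap\ker\psi=1$, which is exactly what you use to conclude $G\hookrightarrow H\times H$, and your primitive-quotient argument in that case never actually needs $K$ to be core-free --- it only needs $K$ proper, since you pass to a maximal $M\supseteq K$ and quotient by $\mathrm{Core}_G(M)$. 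With that repair, your affine-type analysis (unique minimal normal subgroup $V\cong\F_q^a$ regular, point stabilizer acting irreducibly, hence a normal subgroup of prime index $r\le q^a=[G:M]$) is correct and closes the argument.

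It is worth noting that even the repaired route is heavier than the paper's. The paper never invokes the structure of primitive solvable permutation groups: it uses Hall's theorem to write $G=G_1\bowtie G_2$ with $\gcd(|G_2|,|H|)=1$, kills the normal closure of $G_2$ via Proposition~\ref{proposition:quotient lambda equality} so that every prime dividing the resulting quotient divides $|H|$, and then applies Berkovich's lemma (Proposition~\ref{proposition:solvable minimum index subgroup}): a proper subgroup of smallest index in a solvable group is normal, hence maximal normal, hence of prime index by Proposition~\ref{proposition:solvable maximal normal subgroup index}. That single lemma delivers the statement you extract from the affine structure theorem --- namely that a solvable group with a proper subgroup of index $m$ has a normal subgroup of prime index at most $m$ --- and substituting it for your case analysis on $V\rtimes G_0$ would make your induction both correct and essentially elementary.
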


In Section~\ref{section:distance general facts}, we prove general facts
about $\Lambda_{G,H}$, such as how it behaves with respect to group
decompositions, subgroups, and quotients.

The proof of Theorem~\ref{theorem:main} is divided into two sections.
Section~\ref{section:distance nilpotent range} handles the case where $H$ is
nilpotent, and Section~\ref{section:distance solvable groups} handles the case
where $G$ is solvable.

In Section~\ref{section:distance simple groups}, we investigate
$\Lambda_{G,H}$ when $G$ is a non-abelian simple group, and in particular when
$G = A_n$ is the alternating group on $n \ge 5$ objects.
We show that the formula for $\Lambda_{G,H}$ for solvable $G$ does not apply to
non-abelian simple groups, and hence does not extend to arbitrary groups.
We also see that, in general, $\Lambda_{G,H}$ depends not only on the prime
divisors of $G$ and $H$ but also on the subgroup structure of $H$, in
particular whether $H$ contains isomorphic copies of $G$ and how these copies
are embedded in $H$.

\section{Equalizers}
\label{section:equalizers}

We begin by observing that the equalizer of a set of (affine) homomorphisms is
a (coset of a) subgroup of $G$.

\begin{proposition}
\label{proposition:equalizer is coset}
Let $G$ and $H$ be finite groups. If $\Phi \subseteq \Hom(G,H)$, then
$\Eq(\Phi)$ is a subgroup of $G$. If $\Phi' \subseteq \aHom(G,H)$ and
$\Eq(\Phi') \ne \emptyset$, then
there exists $\Phi \subseteq \Hom(G,H)$ with $|\Phi| = |\Phi'|$ such that
$\Eq(\Phi')$ is a coset of $\Eq(\Phi)$.
\end{proposition}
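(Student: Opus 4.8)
The plan is to handle the two assertions separately, starting with the (easier) homomorphism case and then reducing the affine case to it by a translation.

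For the first assertion, the key reduction is the identity $\Eq(\Phi) = \bigcap_{f,g \in \Phi} \Eq(f,g)$ (with the empty intersection read as $G$, which covers $|\Phi| \le 1$). Since an arbitrary intersection of subgroups of $G$ is again a subgroup, it suffices to show that $\Eq(f,g)$ is a subgroup whenever $f,g \in \Hom(G,H)$. I would verify the subgroup axioms directly: $e \in \Eq(f,g)$ because homomorphisms fix the identity; if $x,y \in \Eq(f,g)$ then $f(xy) = f(x)f(y) = g(x)g(y) = g(xy)$; and if $x \in \Eq(f,g)$ then $f(x^{-1}) = f(x)^{-1} = g(x)^{-1} = g(x^{-1})$ (the last check is optional since $G$ is finite). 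This is entirely routine.

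For the second assertion, I would first record the uniqueness of the affine decomposition: if $\phi \in \aHom(G,H)$ is written as $\phi(g) = h\phi_0(g)$ with $\phi_0 \in \Hom(G,H)$, then evaluating at $g = e$ forces $h = \phi(e)$, so $\phi_0(g) = \phi(e)^{-1}\phi(g)$ is determined by $\phi$; write $h_\phi := \phi(e)$ for this element and $\phi_0$ for this homomorphism. Now pick any $a \in \Eq(\Phi')$ (possible since $\Eq(\Phi') \ne \emptyset$) and set $\Phi := \{\phi_0 : \phi \in \Phi'\} \subseteq \Hom(G,H)$. The crucial computation is that for every $g \in G$,
\[
\phi(ag) = h_\phi\phi_0(ag) = h_\phi\phi_0(a)\phi_0(g) = \phi(a)\phi_0(g).
\]
Since every $\phi \in \Phi'$ takes the same value, say $c$, at $a$, this shows $ag \in \Eq(\Phi')$ if and only if $c\,\phi_0(g) = c\,\psi_0(g)$ for all $\phi,\psi \in \Phi'$, i.e.\ if and only if $g \in \Eq(\Phi)$. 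Hence $\Eq(\Phi') = a\,\Eq(\Phi)$, exhibiting $\Eq(\Phi')$ as a coset of the subgroup $\Eq(\Phi)$.

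Finally I would check $|\Phi| = |\Phi'|$, i.e.\ that $\phi \mapsto \phi_0$ is injective on $\Phi'$; this is where the hypothesis $\Eq(\Phi') \ne \emptyset$ is used a second time. If $\phi_0 = \psi_0$, then from $\phi(a) = \psi(a)$, $\phi(a) = h_\phi\phi_0(a)$, and $\psi(a) = h_\psi\psi_0(a) = h_\psi\phi_0(a)$ we get $h_\phi = h_\psi$, and together with $\phi_0 = \psi_0$ this gives $\phi = \psi$. I do not expect a genuine obstacle here; the only points requiring a little care are recognizing that translating by \emph{any} element of the nonempty equalizer is the right move, and that this same element is what forces the decomposition map to be injective — everything else is bookkeeping.
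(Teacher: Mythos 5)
Your proof is correct: the reduction of the first claim to pairwise equalizers and the translation of $\Eq(\Phi')$ by an element $a\in\Eq(\Phi')$, together with the observation that $\phi\mapsto\phi_0$ is injective on $\Phi'$ because all members of $\Phi'$ agree at $a$, is exactly the standard argument. The paper omits the proof of this proposition as routine, and your write-up supplies precisely the argument it intends, so there is nothing further to reconcile.
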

%

A basic question we would like to answer is the following: if
$\phi,\psi \in \Hom(G,H)$, then must the index of $\Eq(\phi,\psi)$ divide
$|H|$? Note that this is true when one of the homomorphisms, say $\psi$, is the
trivial homomorphism mapping to $1_H$, so that $\Eq(\phi,\psi) = \ker\phi$.
This follows from the fact that $G/\ker\phi \cong \im\phi$ which is a subgroup
of $H$, so $[G:\ker\phi] = |\im\phi|$ divides $H$. We will show in
Proposition~\ref{proposition:index of equalizer to p-group} that the more
general statement holds when $H$ is a $p$-group. Before doing so, we collect a
few more basic facts that will be useful to us.

\begin{proposition}
\label{proposition:nonempty intersection of preimage is coset}
Let $G$ and $H$ be finite groups and let $\Phi \subseteq \Hom(G,H)$.
For $h \in H$, if the set
$\bigcap_{\phi \in \Phi} \phi^{-1}(h)$ is nonempty, then it is
a coset of the subgroup $\bigcap_{\phi \in \Phi}\ker\phi$.
\end{proposition}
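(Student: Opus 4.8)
The plan is to reduce to the single-homomorphism case via a standard intersection argument, then handle that case by translating to the identity coset. Set $K = \bigcap_{\phi \in \Phi} \ker\phi$, and suppose $x_0 \in \bigcap_{\phi \in \Phi}\phi^{-1}(h)$, so the set is nonempty. I claim that $\bigcap_{\phi \in \Phi}\phi^{-1}(h) = x_0 K$. For the forward inclusion, if $y$ lies in the set, then for every $\phi \in \Phi$ we have $\phi(y) = h = \phi(x_0)$, hence $\phi(x_0^{-1}y) = \phi(x_0)^{-1}\phi(y) = 1_H$, so $x_0^{-1}y \in \ker\phi$; since this holds for all $\phi \in \Phi$, we get $x_0^{-1}y \in K$, i.e. $y \in x_0 K$. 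Conversely, if $y \in x_0 K$, write $y = x_0 k$ with $k \in K$; then for every $\phi \in \Phi$, $\phi(y) = \phi(x_0)\phi(k) = h \cdot 1_H = h$, so $y \in \bigcap_{\phi \in \Phi}\phi^{-1}(h)$.

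This establishes the equality, and $x_0 K$ is by definition a coset of $K = \bigcap_{\phi \in \Phi}\ker\phi$; note $K$ is a subgroup of $G$ as the intersection of the subgroups $\ker\phi$ (alternatively it is $\Eq(\Phi \cup \{1\})$, a subgroup by Proposition~\ref{proposition:equalizer is coset}). I would remark that the choice of representative $x_0$ is immaterial, as is typical for cosets.

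There is really no main obstacle here: the proof is a routine manipulation of homomorphism properties, and the only thing to be careful about is writing the group operations multiplicatively and in the correct order (since $H$ need not be abelian), which is why I pass through $\phi(x_0^{-1}y) = \phi(x_0)^{-1}\phi(y)$ rather than anything involving division. The statement is essentially the observation that a fiber of a homomorphism is a coset of its kernel, applied simultaneously to the family $\Phi$.
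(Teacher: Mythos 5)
Your proof is correct and is exactly the standard argument one would give here (the paper itself omits the proof as routine): fix a point $x_0$ in the fiber and check both inclusions of $\bigcap_{\phi\in\Phi}\phi^{-1}(h) = x_0\bigl(\bigcap_{\phi\in\Phi}\ker\phi\bigr)$ using the homomorphism property, with the group operations handled correctly in the non-abelian setting. Nothing further is needed.
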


\begin{proposition}
\label{proposition:intersection of normal subgroups}
Let $G$ be a group with normal subgroups $N_1,\ldots,N_k \triangleleft G$.
Then $N \triangleq \bigcap_{i=1}^k N_i$ is a normal subgroup of $G$ and
$G/N$ is isomorphic to a subgroup of $\bigoplus_{i=1}^k (G/N_i)$.
\end{proposition}
\begin{proof}
Consider the homomorphism $\phi:G \to \bigoplus_{i=1}^k (G/N_i)$
defined by $\phi(g) = (gN_1,\ldots,gN_k)$.
Then $\ker\phi = \bigcap_{i=1}^k N_i = N$, which shows that $N$
is a normal subgroup. Moreover, $\im\phi$ is a subgroup of
$\bigoplus_{i=1}^k (G/N_i)$, and by the First Isomorphism Theorem,
$G/N = G/\ker\phi \cong \im\phi$.
\end{proof}

\begin{proposition}
\label{proposition:equalizer cardinality}
Let $G$ and $H$ be finite groups, and let $\Phi \subseteq \Hom(G,H)$.
Let $K \subseteq H$ be the set of $h \in H$ such that
$\bigcap_{\phi \in \Phi}\phi^{-1}(h)$ is nonempty.
Then
\[
\left|\Eq(\Phi)\right| = \left|\bigcap_{\phi \in \Phi}\ker\phi\right| \cdot
\left|K\right|.
\]
\end{proposition}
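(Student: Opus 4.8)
The plan is to partition $\Eq(\Phi)$ according to the common value that the homomorphisms in $\Phi$ take. First I would observe that $x \in \Eq(\Phi)$ if and only if there exists $h \in H$ with $\phi(x) = h$ for every $\phi \in \Phi$; equivalently, $x \in \bigcap_{\phi \in \Phi}\phi^{-1}(h)$. Moreover such an $h$ is unique, since it equals $\phi(x)$ for any (every) $\phi \in \Phi$. This gives the decomposition
\[
\Eq(\Phi) = \bigsqcup_{h \in K} \bigcap_{\phi \in \Phi}\phi^{-1}(h),
\]
where the union ranges over exactly those $h$ for which the fiber intersection is nonempty, that is, over $h \in K$, and the union is disjoint precisely because of the uniqueness of $h$.

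Next I would apply Proposition~\ref{proposition:nonempty intersection of preimage is coset}: for each $h \in K$, the set $\bigcap_{\phi \in \Phi}\phi^{-1}(h)$ is a coset of the subgroup $\bigcap_{\phi \in \Phi}\ker\phi$, and hence has cardinality exactly $\left|\bigcap_{\phi \in \Phi}\ker\phi\right|$. Since $K$ indexes the blocks of the disjoint union and there are $|K|$ of them, summing the cardinalities yields
\[
\left|\Eq(\Phi)\right| = |K| \cdot \left|\bigcap_{\phi \in \Phi}\ker\phi\right|,
\]
which is the claimed identity.

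I do not anticipate a genuine obstacle here; the proof is essentially a counting argument once the fiber decomposition is in place. The only points requiring a moment of care are that the union is truly disjoint (handled by uniqueness of $h$) and that the formula remains correct in degenerate cases — but $1_G$ always lies in $\Eq(\Phi)$ with common value $h = 1_H$, so $K$ is never empty, and the equality holds trivially when $|\Phi| \le 1$ as well.
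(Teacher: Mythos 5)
Your proof is correct and follows essentially the same route as the paper: decompose $\Eq(\Phi)$ as the disjoint union over $h \in K$ of the fiber intersections $\bigcap_{\phi \in \Phi}\phi^{-1}(h)$, then invoke Proposition~\ref{proposition:nonempty intersection of preimage is coset} to see that each block is a coset of $\bigcap_{\phi \in \Phi}\ker\phi$ and count. Your added remarks on the uniqueness of $h$ and the nonemptiness of $K$ are fine but not needed beyond what the paper already does implicitly.
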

\begin{proof}
We decompose $\Eq(\Phi)$ into the disjoint union
\[
\Eq(\Phi) = \bigcup_{h \in K}
\left(\bigcap_{\phi \in \Phi}\phi^{-1}(h)\right).
\]
The result then follows from the fact that each
$\bigcap_{\phi \in \Phi}\phi^{-1}(h)$ is a coset of
$\bigcap_{\phi \in \Phi}\ker\phi$, which follows
from Proposition~\ref{proposition:nonempty intersection of preimage is coset}.
\end{proof}

The following proposition is simply the observation that the maximum agreement
between two affine homomorphisms is achievable by two homomorphisms, which
will allow us to reason about homomorphisms rather than affine homomorphisms
in later proofs, without loss of generality.
\begin{proposition}
\label{proposition:lambda hom}
If $G$ and $H$ are finite groups, then there exist $\phi,\psi \in \Hom(G,H)$
such that $\agr(\phi,\psi) = \Lambda_{G,H}$, so if $|\Hom(G,H)| > 1$, then
\[
\Lambda_{G,H} = \max_{\substack{\phi,\psi \in \Hom(G,H)\\ \phi \ne \psi}}
\agr(\phi,\psi)
\]
\end{proposition}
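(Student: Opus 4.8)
The plan is to reduce the maximum over pairs of affine homomorphisms to a maximum over pairs of ordinary homomorphisms by exploiting translation invariance of agreement. The key observation is that if $\phi, \psi \in \aHom(G,H)$ with $\phi(g) = h_1\phi_0(g)$ and $\psi(g) = h_2\psi_0(g)$ for $\phi_0, \psi_0 \in \Hom(G,H)$ and $h_1, h_2 \in H$, then $\Eq(\phi,\psi)$ is unchanged if we left-multiply both $\phi$ and $\psi$ by the same element of $H$. Indeed, for any $h \in H$, we have $h\phi(g) = h\psi(g)$ if and only if $\phi(g) = \psi(g)$, so $\Eq(h\phi, h\psi) = \Eq(\phi,\psi)$ and hence $\agr(h\phi, h\psi) = \agr(\phi,\psi)$.

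First I would take $h = h_1^{-1}$, so that $h_1^{-1}\phi = \phi_0 \in \Hom(G,H)$ and $h_1^{-1}\psi(g) = h_1^{-1}h_2\psi_0(g)$, which is still an affine homomorphism (with translation element $h_1^{-1}h_2$). This shows that without loss of generality one of the two functions achieving a given agreement value is an honest homomorphism. To handle the second function, I would use the remark from the introduction that left affine homomorphisms coincide with right affine homomorphisms: $h_1^{-1}h_2\psi_0(g) = \psi_1(g)\cdot(h_1^{-1}h_2)$ where $\psi_1(g) \triangleq (h_1^{-1}h_2)\psi_0(g)(h_1^{-1}h_2)^{-1}$ is a homomorphism. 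Hmm — but right-multiplication by a constant does not obviously preserve the equalizer against a fixed homomorphism. Instead, the cleaner route: by Proposition~\ref{proposition:equalizer is coset}, if $\Eq(\phi_0, h_1^{-1}\psi)$ is nonempty then it is a coset $xN$ of some subgroup $N = \Eq(\phi_0', \psi_0')$ with $\phi_0', \psi_0' \in \Hom(G,H)$ and $|\{\phi_0', \psi_0'\}| = 2$ (the case $|\Phi'| = 2$ of that proposition), so $|\Eq(\phi_0, h_1^{-1}\psi)| = |N| = |\Eq(\phi_0', \psi_0')|$, giving $\agr(\phi_0, h_1^{-1}\psi) = \agr(\phi_0', \psi_0')$ with both arguments genuine homomorphisms.

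Assembling these: take $\phi, \psi \in \aHom(G,H)$, $\phi \ne \psi$, achieving $\Lambda_{G,H} = \agr(\phi,\psi)$. If $\Eq(\phi,\psi) = \emptyset$ then $\Lambda_{G,H} = 0$, which is trivially attained by any two distinct homomorphisms whenever $|\Hom(G,H)| > 1$, so assume $\Eq(\phi,\psi) \ne \emptyset$; then by the argument above there exist $\phi', \psi' \in \Hom(G,H)$, $\phi' \ne \psi'$ (distinctness is forced since $\agr(\phi',\psi') = \Lambda_{G,H}$ and if this equalled $1$ then $\phi = \psi$, contradiction — and if it's $<1$ they cannot be equal), with $\agr(\phi',\psi') = \Lambda_{G,H}$. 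The reverse inequality $\agr(\phi',\psi') \le \Lambda_{G,H}$ for homomorphisms is immediate since $\Hom(G,H) \subseteq \aHom(G,H)$. Finally, when $|\Hom(G,H)| > 1$ the set over which we maximize on the right-hand side is nonempty, so the displayed equality holds. The main obstacle is the bookkeeping in the second paragraph — making sure that after the translation the second function lands in a form where Proposition~\ref{proposition:equalizer is coset} applies cleanly with exactly two homomorphisms — but this is precisely the content of that proposition's $|\Phi'| = 2$ case, so no new work is needed.
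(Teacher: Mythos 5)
Your proposal is correct and, after the (unnecessary) translation detour in the first two paragraphs, lands on exactly the paper's argument: apply Proposition~\ref{proposition:equalizer is coset} with $|\Phi'|=2$ to a maximizing pair in $\aHom(G,H)$ to obtain two distinct homomorphisms whose equalizer has the same cardinality, hence the same agreement. One small correction to your empty-equalizer branch: two \emph{distinct} homomorphisms always agree at $1_G$, so their agreement is at least $1/|G| > 0$ rather than $0$; the right conclusion is that when $|\Hom(G,H)| > 1$ we have $\Lambda_{G,H} \ge 1/|G| > 0$, so the maximizing pair necessarily has nonempty equalizer and that branch never occurs (a point the paper's own proof also leaves implicit).
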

\begin{proof}
Let $\phi',\psi' \in \aHom(G,H)$ such that $\agr(\phi',\psi') = \Lambda_{G,H}$.
By Proposition~\ref{proposition:equalizer is coset}, there exist
$\phi,\psi \in \Hom(G,H)$ such that $|\Eq(\phi,\psi)| = |\Eq(\phi',\psi')|$,
hence $\agr(\phi,\psi) = \agr(\phi',\psi')$.
\end{proof}

Finally, we conclude this section by proving the following.

\begin{proposition}
\label{proposition:index of equalizer to p-group}
Let $G$ be a finite group and let $H$ be a finite $p$-group.
If $\Phi \subseteq \aHom(G,H)$ and $\Eq(\Phi) \ne \emptyset$,
then $[G:\Eq(\Phi)]$ is a power of $p$.
In particular,
\[
\Lambda_{G,H} \le \frac1p.
\]
\end{proposition}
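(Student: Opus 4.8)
The plan is to reduce everything to the case $\Phi \subseteq \Hom(G,H)$ and then show that the subgroup $\Eq(\Phi) = \bigcap_{\phi \in \Phi} \ker\phi$ has $p$-power index, since the ``In particular'' clause follows immediately: if $\Lambda_{G,H} > 0$ then some pair of distinct homomorphisms has nonempty equalizer of index $[G:\Eq(\phi,\psi)] \ge 2$, and being a power of $p$ it is at least $p$, so $\agr(\phi,\psi) = 1/[G:\Eq(\phi,\psi)] \le 1/p$.

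First I would dispose of the affine case. If $\Phi' \subseteq \aHom(G,H)$ with $\Eq(\Phi') \ne \emptyset$, then by Proposition~\ref{proposition:equalizer is coset} there is $\Phi \subseteq \Hom(G,H)$ with $\Eq(\Phi')$ a coset of $\Eq(\Phi)$, so $[G:\Eq(\Phi')] = [G:\Eq(\Phi)]$ and it suffices to treat homomorphisms. Next I would like to assume $\Phi$ is finite: since $G$ is finite there are only finitely many distinct kernels among the $\ker\phi$, $\phi \in \Phi$, and $\Eq(\Phi) = \bigcap_{\phi \in \Phi}\ker\phi$ equals the intersection of a finite subcollection.

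The core step is then the following: if $N_1,\dots,N_k \triangleleft G$ are normal subgroups with each $G/N_i$ a $p$-group (being a subgroup of the $p$-group $H$ — indeed $G/\ker\phi \cong \im\phi \le H$), then $[G : \bigcap_i N_i]$ is a power of $p$. This is exactly where Proposition~\ref{proposition:intersection of normal subgroups} does the work: with $N = \bigcap_i N_i$, that proposition gives an embedding $G/N \hookrightarrow \bigoplus_{i=1}^k (G/N_i)$, and the right-hand side is a finite $p$-group, so its subgroup $G/N$ has order a power of $p$, i.e.\ $[G:N]$ is a power of $p$. Combining, $[G:\Eq(\Phi)] = [G:N]$ is a power of $p$, which proves the main assertion.

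I do not expect a serious obstacle here; the only thing to be careful about is making sure the reduction to finitely many kernels is stated cleanly (so that Proposition~\ref{proposition:intersection of normal subgroups}, which is phrased for finitely many normal subgroups, applies), and observing that $\Eq(\Phi)$ itself is automatically normal as the intersection of the normal subgroups $\ker\phi$ — though normality is not even needed for the index statement, only the embedding into a $p$-group. The mild subtlety in the ``in particular'' clause is that $\Lambda_{G,H}$ is defined via $\aHom$, but Proposition~\ref{proposition:lambda hom} (or directly Proposition~\ref{proposition:equalizer is coset}) lets us realize the maximum agreement by a pair of genuine homomorphisms $\phi \ne \psi$, whose equalizer is a proper subgroup, hence of index a power of $p$ that is at least $p$.
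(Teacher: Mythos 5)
Your overall architecture matches the paper's (reduce to $\Phi \subseteq \Hom(G,H)$ via Proposition~\ref{proposition:equalizer is coset}, then use Proposition~\ref{proposition:intersection of normal subgroups} to see that $G/\bigl(\bigcap_{\phi\in\Phi}\ker\phi\bigr)$ embeds in a $p$-group), but your central identification $\Eq(\Phi) = \bigcap_{\phi\in\Phi}\ker\phi$ is false, and it is exactly the point this proposition has to deal with. The equalizer consists of the $x$ where all $\phi \in \Phi$ take the \emph{same} value, not the value $1_H$; so in general $\Eq(\Phi)$ properly contains $\bigcap_{\phi}\ker\phi$. For instance, with $G = H = \Z_4$, $\phi = \id$ and $\psi(x) = -x$, one has $\Eq(\phi,\psi) = \{0,2\}$ while $\ker\phi\cap\ker\psi = \{0\}$; and for $G = H = S_3$ with $\phi = \id$ and $\psi$ conjugation by a transposition, $\Eq(\phi,\psi)$ is a non-normal subgroup of order $2$, which also refutes your side remark that $\Eq(\Phi)$ is ``automatically normal as the intersection of the kernels.'' As written, your chain ``$[G:\Eq(\Phi)] = [G:N]$'' therefore does not hold. (Your worry about $\Phi$ being infinite is moot --- there are only finitely many functions from the finite set $G$ to $H$ --- but that is harmless.)

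The gap is short to close, and doing so lands you on the paper's actual proof. What you really proved is that $[G:\bigcap_{\phi\in\Phi}\ker\phi]$ is a power of $p$. Since $\bigcap_{\phi\in\Phi}\ker\phi \le \Eq(\Phi)$ and $\Eq(\Phi)$ is a subgroup of $G$ (Proposition~\ref{proposition:equalizer is coset}), the index $[G:\Eq(\Phi)]$ divides $[G:\bigcap_{\phi\in\Phi}\ker\phi]$, hence is itself a power of $p$. The paper phrases this same step through Proposition~\ref{proposition:equalizer cardinality} (the equalizer is a disjoint union of cosets of $\bigcap_\phi\ker\phi$, one for each admissible common value $h$) together with Lagrange's theorem. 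Your handling of the affine reduction and of the ``in particular'' clause is fine and agrees with the paper.
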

\begin{proof}
By Proposition~\ref{proposition:equalizer is coset}, we may assume
that $\Phi \subseteq \Hom(G,H)$.
It follows from Proposition~\ref{proposition:intersection of normal subgroups}
that $G/\left(\bigcap_{\phi \in \Phi} \ker\phi \right)$
is isomorphic to a subgroup of
$\bigoplus_{\phi \in \Phi} (G/\ker\phi) \cong
\bigoplus_{\phi \in \Phi} \im\phi$.
But the $\im\phi$ are subgroups of $H$, so they are $p$-groups, hence
$\bigoplus_{\phi \in \Phi}\im\phi$ is a $p$-group, and so
$G/\left(\bigcap_{\phi \in \Phi} \ker\phi \right)$
is a $p$-group, i.e.\ 
\[
\frac{|G|}{\left|\bigcap_{\phi \in \Phi} \ker\phi\right|} =
p^k
\]
for some $k$.
By Proposition~\ref{proposition:equalizer cardinality},
there is some integer $m$ such that
\[
\frac{|G|}{|\Eq(\Phi)|} =
\frac{|G|}{\left|\bigcap_{\phi \in \Phi} \ker\phi\right| \cdot m}
= \frac{p^k}{m}.
\]
By Proposition~\ref{proposition:equalizer is coset},
$\Eq(\Phi)$ is a subgroup of $G$, and so by
Lagrange's theorem,
$\frac{p^k}{m} = \frac{|G|}{|\Eq(\Phi)|}$ is an integer, hence
$m$ divides $p^k$, therefore $\frac{p^k}{m}$ is a power of $p$.
\end{proof}

\section{General facts}
\label{section:distance general facts}

In this section, we investigate general properties of $\Lambda_{G,H}$.

\subsection{Subgroups and Quotients}

\begin{proposition}
\label{proposition:range subgroup lambda}
If $G$ and $H$ are finite groups and $K \le H$ is a subgroup,
then
\[
\Lambda_{G,H} \ge \Lambda_{G,K}.
\]
\end{proposition}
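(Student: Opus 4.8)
The statement $\Lambda_{G,H} \ge \Lambda_{G,K}$ for a subgroup $K \le H$ should follow almost immediately from the fact that every (affine) homomorphism $G \to K$ is, after composing with the inclusion $K \hookrightarrow H$, an (affine) homomorphism $G \to H$, and that this inclusion does not change any equalizers. Concretely, the plan is as follows. First I would invoke Proposition~\ref{proposition:lambda hom} to reduce to honest homomorphisms: pick $\phi, \psi \in \Hom(G,K)$ with $\phi \ne \psi$ and $\agr(\phi,\psi) = \Lambda_{G,K}$ (if $|\Hom(G,K)| = 1$ then $\Lambda_{G,K} = 0 \le \Lambda_{G,H}$ trivially, so this case is dispatched separately, or one works with $\aHom$ directly).

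Next, let $\iota : K \hookrightarrow H$ be the inclusion map, and set $\phi' = \iota \circ \phi$ and $\psi' = \iota \circ \psi$. These are homomorphisms $G \to H$. Since $\iota$ is injective, $\phi'(x) = \psi'(x)$ if and only if $\phi(x) = \psi(x)$, so $\Eq(\phi',\psi') = \Eq(\phi,\psi)$, hence $\agr(\phi',\psi') = \agr(\phi,\psi) = \Lambda_{G,K}$. Also $\phi' \ne \psi'$ because $\iota$ is injective and $\phi \ne \psi$. Therefore
\[
\Lambda_{G,H} = \max_{\substack{\alpha,\beta \in \aHom(G,H) \\ \alpha \ne \beta}} \agr(\alpha,\beta) \ge \agr(\phi',\psi') = \Lambda_{G,K}.
\]

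There is no real obstacle here; the only mild subtlety is handling the edge case $|\Hom(G,K)| = 1$ (equivalently $\Lambda_{G,K} = 0$), which is immediate since $\Lambda_{G,H} \ge 0$ always, and making sure the reduction to $\Hom$ via Proposition~\ref{proposition:lambda hom} is applied on the $K$ side rather than the $H$ side. One could alternatively skip Proposition~\ref{proposition:lambda hom} entirely and argue directly with affine homomorphisms: if $\phi(g) = h\phi_0(g)$ with $h \in K$ and $\phi_0 \in \Hom(G,K)$, then composing with $\iota$ gives an element of $\aHom(G,H)$ with the same equalizer structure, so the containment $\aHom(G,K) \hookrightarrow \aHom(G,H)$ (preserving equalizers) gives the inequality directly.
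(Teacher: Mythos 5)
Your proposal is correct and matches the paper's argument, which is the one-line observation that $\aHom(G,K) \subseteq \aHom(G,H)$ (with equalizers unchanged under the inclusion $K \hookrightarrow H$) — exactly the alternative route you note at the end. The detour through Proposition~\ref{proposition:lambda hom} is harmless but unnecessary.
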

\begin{proof}
This follows from the fact that $\aHom(G,K) \subseteq \aHom(G,H)$.
\end{proof}

\begin{proposition}
\label{proposition:quotient lambda}
If $G,H$ are nontrivial finite groups and
$N \triangleleft G$ is a normal subgroup,
then
\[
\Lambda_{G,H} \ge \Lambda_{G/N,H}.
\]
\end{proposition}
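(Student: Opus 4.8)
The plan is to pull back affine homomorphisms along the quotient map $\pi : G \to G/N$. First I would observe that precomposition with $\pi$ sends $\aHom(G/N,H)$ into $\aHom(G,H)$: if $\phi(x) = h\phi_0(x)$ for some $h \in H$ and $\phi_0 \in \Hom(G/N,H)$, then $\phi \circ \pi$ is the affine homomorphism $g \mapsto h\,(\phi_0 \circ \pi)(g)$, where $\phi_0 \circ \pi \in \Hom(G,H)$. Since $\pi$ is surjective, this precomposition map is injective, so distinct affine homomorphisms on $G/N$ pull back to distinct affine homomorphisms on $G$.

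Next I would compare agreements. For any $\phi,\psi \in \aHom(G/N,H)$ we have $\Eq(\phi \circ \pi, \psi \circ \pi) = \pi^{-1}(\Eq(\phi,\psi))$. Since $\ker\pi = N$ has order $|N|$, every fiber of $\pi$ has exactly $|N|$ elements, so $|\pi^{-1}(\Eq(\phi,\psi))| = |N| \cdot |\Eq(\phi,\psi)|$. Combining this with $|G| = |N|\cdot|G/N|$ gives
\[
\agr(\phi \circ \pi, \psi \circ \pi)
= \frac{|N|\cdot|\Eq(\phi,\psi)|}{|N|\cdot|G/N|}
= \agr(\phi,\psi).
\]

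To finish, I would choose $\phi \ne \psi$ in $\aHom(G/N,H)$ with $\agr(\phi,\psi) = \Lambda_{G/N,H}$; such a pair exists whenever $G/N$ is nontrivial, since then $|\aHom(G/N,H)| \ge |H| \ge 2$ (using that $H$ is nontrivial). Then $\phi \circ \pi \ne \psi \circ \pi$ are affine homomorphisms $G \to H$ with agreement $\Lambda_{G/N,H}$, so $\Lambda_{G,H} \ge \Lambda_{G/N,H}$. The only remaining case is $N = G$, where $G/N$ is trivial, $\Lambda_{G/N,H} = 0$, and the inequality holds trivially.

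I do not expect any serious obstacle: the argument is a routine pullback along the quotient map, and the only points requiring a moment of care are the fiber-size bookkeeping and the degenerate case $N = G$.
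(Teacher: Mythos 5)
Your proof is correct and follows essentially the same route as the paper: pull the maximizing pair back along the quotient map $\pi:G\to G/N$ and count fibers of size $|N|$ to see the agreement is preserved. The only cosmetic difference is that you work directly with $\aHom(G/N,H)$ (using constant maps to guarantee a maximizing pair and treating $N=G$ separately), whereas the paper first invokes Proposition~\ref{proposition:lambda hom} to reduce to a pair of genuine homomorphisms; both handle the bookkeeping identically.
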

\begin{proof}
By Proposition~\ref{proposition:lambda hom}, there exist
$\phi_{G/N},\psi_{G/N} \in \Hom(G/N,H)$ such that
$\agr(\phi_{G/N},\psi_{G/N}) = \Lambda_{G/N,H}$.
Define $\phi,\psi:G \to H$ as follows.
For $x \in G$, define $\phi(x) = \phi_{G/N}(xN)$ and
$\psi(x) = \psi_{G/N}(xN)$. Then $\phi,\psi \in \Hom(G,H)$ since
$\phi$ is the composition of $\phi_{G/N}$ with the natural quotient map
$G \to G/N$, and similarly for $\psi$. It suffices to show that
$\agr(\phi,\psi) = \agr(\phi_{G/N},\psi_{G/N})$, for which it suffices
to show that $|\Eq(\phi,\psi)| = |N| \cdot |\Eq(\phi_{G/N},\psi_{G/N})|$.
This follows from the fact that $\phi$ and $\psi$ are constant on cosets, so
$\Eq(\phi,\psi)$ is a disjoint union of cosets, and the cosets $xN$ on which
$\phi$ and $\psi$ agree are exactly those for which
$\phi_{G/N}(xN) = \psi_{G/N}(xN)$.
\end{proof}

\begin{proposition}
\label{proposition:quotient lambda equality}
If $G,H$ are nontrivial finite groups and $S \le G$ is a subgroup of $G$
such that $|\Hom(S,H)| = 1$, then $\Hom(G,H) \cong \Hom(G/N,H)$, where
$N \trianglelefteq G$ is the smallest normal subgroup of $G$ containing $S$.
In particular,
\[
\Lambda_{G,H} = \Lambda_{G/N,H}.
\]
\end{proposition}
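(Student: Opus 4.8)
The plan is to show that \emph{every} homomorphism $\phi \colon G \to H$ kills $N$, so that the natural quotient map $\pi \colon G \to G/N$ induces a bijection $\Hom(G/N,H) \to \Hom(G,H)$, $\bar\phi \mapsto \bar\phi \circ \pi$; the equality of maximum agreements will then be a routine consequence.

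First I would observe that the hypothesis $|\Hom(S,H)| = 1$ says precisely that the only homomorphism $S \to H$ is the trivial one (the trivial homomorphism is always present, so if there is exactly one, it is trivial). Hence for any $\phi \in \Hom(G,H)$, the restriction $\phi|_S$ is trivial, i.e.\ $S \subseteq \ker\phi$. Since $\ker\phi \trianglelefteq G$ and $N$ is by definition the smallest normal subgroup of $G$ containing $S$, we get $N \subseteq \ker\phi$. Therefore $\phi$ factors uniquely through $\pi$: there is a unique $\bar\phi \in \Hom(G/N,H)$ with $\phi = \bar\phi \circ \pi$. Conversely, every $\bar\phi \in \Hom(G/N,H)$ yields $\bar\phi \circ \pi \in \Hom(G,H)$, and since $\pi$ is surjective these two assignments are mutually inverse, giving the claimed bijection $\Hom(G,H) \cong \Hom(G/N,H)$.

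For the ``in particular'' statement, one inequality $\Lambda_{G,H} \ge \Lambda_{G/N,H}$ is exactly Proposition~\ref{proposition:quotient lambda}. For the reverse, I would use Proposition~\ref{proposition:lambda hom} to pick $\phi \ne \psi$ in $\Hom(G,H)$ with $\agr(\phi,\psi) = \Lambda_{G,H}$. By the above, $\phi = \bar\phi \circ \pi$ and $\psi = \bar\psi \circ \pi$ with $\bar\phi, \bar\psi \in \Hom(G/N,H)$, and $\bar\phi \ne \bar\psi$ because $\pi$ is onto. Since $\phi$ and $\psi$ are constant on cosets of $N$, we have $\Eq(\phi,\psi) = \pi^{-1}(\Eq(\bar\phi,\bar\psi))$, so $|\Eq(\phi,\psi)| = |N| \cdot |\Eq(\bar\phi,\bar\psi)|$ and thus $\agr(\phi,\psi) = \agr(\bar\phi,\bar\psi) \le \Lambda_{G/N,H}$. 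Hence $\Lambda_{G,H} \le \Lambda_{G/N,H}$, and equality follows; this last computation essentially reproduces the argument in the proof of Proposition~\ref{proposition:quotient lambda}.

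I do not expect a genuine obstacle here. The only point requiring a little care is the meaning of ``$\cong$'': since $H$ need not be abelian, $\Hom(G,H)$ is merely a set, so the isomorphism should be stated explicitly as the natural bijection $\bar\phi \mapsto \bar\phi \circ \pi$ rather than via any group structure. Everything else — that the trivial homomorphism is forced on $S$, that $\ker\phi$ is normal and hence contains $N$, the universal property of the quotient, and the coset counting for the agreement — is immediate from the definitions and the earlier propositions.
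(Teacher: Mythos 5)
Your proof is correct and follows essentially the same route as the paper: the hypothesis forces $\phi|_S$ to be trivial, so $S \le \ker\phi$, minimality of $N$ gives $N \le \ker\phi$, and every homomorphism factors through $\pi\colon G \to G/N$, yielding the bijection. The only difference is that you spell out the agreement computation for $\Lambda_{G,H} = \Lambda_{G/N,H}$ (via Proposition~\ref{proposition:quotient lambda} and the coset count), which the paper leaves implicit after establishing the correspondence.
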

\begin{proof}
Let $\phi \in \Hom(G,H)$. The restriction of $\phi$ to the domain $S$ is a
homomorphism in $\Hom(S,H)$, which is trivial by assumption. This means that
$S \le \ker\phi$. Since $\ker\phi \trianglelefteq G$, by minimality of $N$
it follows that $N \le \ker\phi$. In particular, $\phi=\phi' \circ \pi$ where
$\phi' \in \Hom(G/N,H)$ and $\pi:G \to G/N$ is the natural quotient map.
\end{proof}

\subsection{Zappa-Sz\'ep products}

\begin{proposition}
\label{proposition:product lambda domain}
If $G$ and $H$ are finite groups and $G = G_1 \bowtie G_2$ for some subgroups
$G_1,G_2 \le G$, then
\[
\Lambda_{G,H} \le \max\{\Lambda_{G_1,H},\Lambda_{G_2,H}\}.
\]
\end{proposition}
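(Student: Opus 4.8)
The plan is to reduce the statement to an assertion about a single one of the two factors. When $\Lambda_{G,H} = 0$ there is nothing to prove, so I assume $\Lambda_{G,H} > 0$. By Proposition~\ref{proposition:lambda hom} I may choose $\phi,\psi \in \Hom(G,H)$ with $\phi \ne \psi$ and $\agr(\phi,\psi) = \Lambda_{G,H}$, and set $K \triangleq \Eq(\phi,\psi)$, which by Proposition~\ref{proposition:equalizer is coset} is a subgroup of $G$. Restricting to the subgroups $G_1$ and $G_2$ yields affine homomorphisms $\phi|_{G_i}, \psi|_{G_i} \in \aHom(G_i,H)$, and straight from the definitions $K \cap G_i = \Eq(\phi|_{G_i},\psi|_{G_i})$ for $i = 1,2$.

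The first key step is a dichotomy: $\phi$ and $\psi$ cannot agree on both $G_1$ and $G_2$. Indeed, since $G = G_1 G_2$, if $\phi|_{G_1} = \psi|_{G_1}$ and $\phi|_{G_2} = \psi|_{G_2}$, then $\phi(g_1 g_2) = \phi(g_1)\phi(g_2) = \psi(g_1)\psi(g_2) = \psi(g_1 g_2)$ for all $g_1 \in G_1$, $g_2 \in G_2$, forcing $\phi = \psi$, a contradiction. So, after relabelling the two factors if necessary, I may assume $\phi|_{G_1} \ne \psi|_{G_1}$, and then $|K \cap G_1|/|G_1| = \agr(\phi|_{G_1},\psi|_{G_1}) \le \Lambda_{G_1,H}$.

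The remaining, and main, step is the purely group-theoretic inequality $|K|/|G| \le |K \cap G_1|/|G_1|$. This I would obtain from the product formula $|K G_1| = |K|\cdot|G_1|/|K \cap G_1|$ together with $|K G_1| \le |G| = |G_1|\cdot|G_2|$, where the equality $|G| = |G_1|\cdot|G_2|$ is precisely the content of $G = G_1 \bowtie G_2$. Rearranging gives $|K| \le |G_2|\cdot|K \cap G_1|$, i.e. $|K|/|G| \le |K \cap G_1|/|G_1|$. Chaining the estimates, $\Lambda_{G,H} = |K|/|G| \le |K \cap G_1|/|G_1| \le \Lambda_{G_1,H} \le \max\{\Lambda_{G_1,H},\Lambda_{G_2,H}\}$, which is the claim. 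I do not anticipate a serious obstacle; the only points needing a little care are the trivial case $\Lambda_{G,H}=0$, the reduction from affine homomorphisms to honest homomorphisms (supplied by Proposition~\ref{proposition:lambda hom}), and the degenerate cases where $G_1$ or $G_2$ is trivial, where the inequality holds immediately.
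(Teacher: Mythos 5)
Your proof is correct, but it takes a genuinely different route from the paper. The paper fixes an optimal pair $\phi,\psi$ and averages over the translates $\phi_y(x)=\phi(xy)$, $y\in G_2$, which are affine homomorphisms of $G_1$; if some translate pair is distinct it bounds $\Lambda_{G,H}$ by $\Lambda_{G_1,H}$, and otherwise it shows $\phi,\psi$ agree on $G_1$ and proves the exact structural identity $\Eq(\phi,\psi)=G_1\bowtie\Eq(\phi_{G_2},\psi_{G_2})$, giving the bound via $\Lambda_{G_2,H}$. You instead observe the dichotomy that $\phi,\psi$ cannot agree on both factors (since $G=G_1G_2$ and they are honest homomorphisms), and then invoke the elementary product-set inequality $|KG_1|=|K|\,|G_1|/|K\cap G_1|\le |G|$ with $K=\Eq(\phi,\psi)$, which gives $|K|/|G|\le |K\cap G_1|/|G_1|=\agr(\phi|_{G_1},\psi|_{G_1})\le\Lambda_{G_1,H}$. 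Your argument is shorter and slightly more general: it never uses $G_1\cap G_2=\{1\}$ nor $|G|=|G_1|\,|G_2|$ (only $KG_1\subseteq G$), so it proves the bound for any factorization $G=G_1G_2$ into subgroups, and it avoids working with affine homomorphisms of $G_1$ altogether; note the restrictions $\phi|_{G_i},\psi|_{G_i}$ are in fact homomorphisms, which is exactly what your dichotomy step needs. What the paper's approach buys in exchange is the explicit description of the equalizer as a Zappa-Sz\'ep product in the degenerate case, which is structural information your counting argument does not provide. Minor housekeeping, which you essentially flag: in the case $\Lambda_{G,H}>0$ the pair supplied by Proposition~\ref{proposition:lambda hom} is automatically distinct since $\Lambda_{G,H}<1$, and if $G_1$ is trivial the dichotomy simply forces you into the $G_2$ case.
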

\begin{proof}
If $|\Hom(G,H)| = 1$, then $\Lambda_{G,H} = 0$ and so the bound is trivial.
Assume $|\Hom(G,H)| > 1$.
By Proposition~\ref{proposition:equalizer is coset}, there exist
$\phi,\psi \in \Hom(G,H)$ such that $\agr(\phi,\psi) = \Lambda_{G,H}$.
First, we introduce some convenient notation.
Denote by $\phi_{G_1}:G_1 \to H$ and $\phi_{G_2}:G_2 \to H$ the restrictions
of $\phi$ to $G_1$ and $G_2$ respectively, and similarly for
$\psi_{G_1}$ and $\psi_{G_2}$. For $y \in G_2$, denote by
$\phi_y:G_1 \to H$ the restriction $\phi_y(x) \triangleq \phi(xy)$.
It is straightforward to verify that
$\phi_{G_i},\psi_{G_i} \in \Hom(G_i,H)$ for $i \in \{1,2\}$ and
$\phi_y,\psi_y \in \aHom(G_1,H)$ for $y \in G_2$.

By averaging, there exists $y \in G_2$ such that
$\agr(\phi_y,\psi_y) \ge \Lambda_{G,H}$. If $\phi_y \ne \psi_y$, then we are
done since
\[
\Lambda_{G,H} \le \agr(\phi_y,\psi_y) \le \Lambda_{G_1,H}.
\]

Otherwise, suppose $\phi_y = \psi_y$. Then $\phi_{G_1} = \psi_{G_1}$, since
for $x \in G_1$,
\[
\phi(x) = \phi_y(x)\phi_y(1_G)^{-1} = \psi_y(x)\psi_y(1_G)^{-1}
= \psi(x).
\]
We claim that
\[
\Eq(\phi,\psi) = G_1 \bowtie\Eq(\phi_{G_2},\psi_{G_2}).
\]
For the forward containment, observe that if $xz \in \Eq(\phi,\psi)$ with
$x \in G_1$ and $z \in G_2$, then
\[
\phi(z) = \phi(x)^{-1}\phi(xz) = \phi_{G_1}(x)^{-1}\phi(xz)
= \psi_{G_1}(x)^{-1}\psi(xz) = \psi(x)^{-1}\psi(xz) = \psi(z)
\]
and so $z \in \Eq(\phi_{G_2},\psi_{G_2})$. Conversely, if
$x \in G_1$ and $z \in \Eq(\phi_{G_2},\psi_{G_2})$, then
\[
\phi(xz) = \phi_{G_1}(x)\phi_{G_2}(z) = \psi_{G_1}(x)\psi_{G_2}(z)
= \psi(xz)
\]
and so $xz \in \Eq(\phi,\psi)$. This completes the proof of our claim.
Moreover, since $\Eq(\phi,\psi) \ne G$, $\Eq(\phi_{G_2},\psi_{G_2}) \ne G_2$,
hence $\phi_{G_2} \ne \psi_{G_2}$. Therefore,
\[
\Lambda_{G,H} =
\frac{|\Eq(\phi,\psi)|}{|G|}
= \frac{|\Eq(\phi_{G_2},\psi_{G_2})|}{|G_2|}
\le \Lambda_{G_2,H}.
\]

\end{proof}

\begin{proposition}
\label{proposition:product coprime lambda}
If $G$ and $H$ are finite groups and $G = G_1 \bowtie G_2$ for some subgroups
$G_1,G_2 \le G$ and $|\Hom(G_2,H)| = 1$, then every
$\phi \in \aHom(G,H)$ is of the form $\phi(xy) = \psi(x)$ for some
$\psi \in \aHom(G_1,H)$ and every $x \in G_1$ and $y \in G_2$.
In particular,
\[
\Lambda_{G,H} \le \Lambda_{G_1,H}
\]
\end{proposition}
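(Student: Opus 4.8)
The plan is to show directly that every affine homomorphism $\phi \in \aHom(G,H)$ factors through the projection $G = G_1 \bowtie G_2 \to G_1$ (as a set map), and then derive the bound on $\Lambda_{G,H}$ as a corollary of this structural fact. First I would reduce, using Proposition~\ref{proposition:equalizer is coset} and Proposition~\ref{proposition:lambda hom}, to the case $\phi \in \Hom(G,H)$: if I can show every genuine homomorphism $\phi:G \to H$ satisfies $\phi(xy) = \phi(x)$ for $x \in G_1$, $y \in G_2$, then the affine version follows by left-translation, since $h\phi(xy) = h\phi(x)$.

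So let $\phi \in \Hom(G,H)$. The restriction $\phi|_{G_2}:G_2 \to H$ is a homomorphism, and since $|\Hom(G_2,H)| = 1$ by hypothesis, this restriction must be the trivial homomorphism; that is, $\phi(y) = 1_H$ for every $y \in G_2$. Now for $x \in G_1$ and $y \in G_2$ I would like to write $\phi(xy) = \phi(x)\phi(y) = \phi(x) \cdot 1_H = \phi(x)$ --- but this is only immediately valid if $\phi$ is multiplicative on the pair $(x,y)$, which for a genuine homomorphism it is: $\phi(xy) = \phi(x)\phi(y)$ holds for all $x,y \in G$, not just for $x,y$ in the same factor. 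Hence $\phi(xy) = \phi(x)$, and setting $\psi = \phi|_{G_1} \in \Hom(G_1,H) \subseteq \aHom(G_1,H)$ gives the claimed form. Lifting back through a left translation $h$ handles the general affine case with $\psi \in \aHom(G_1,H)$.

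For the bound on $\Lambda_{G,H}$, take $\phi,\psi \in \aHom(G,H)$ with $\phi \ne \psi$, and write $\phi(xy) = \phi_1(x)$, $\psi(xy) = \psi_1(x)$ for $\phi_1,\psi_1 \in \aHom(G_1,H)$ as just established. Since $\phi \ne \psi$ we must have $\phi_1 \ne \psi_1$. Now $xy \in \Eq(\phi,\psi)$ if and only if $\phi_1(x) = \psi_1(x)$, i.e.\ if and only if $x \in \Eq(\phi_1,\psi_1)$, and this condition is independent of $y \in G_2$. Using the Zappa--Sz\'ep decomposition $|G| = |G_1|\cdot|G_2|$ and the fact that every element of $G$ is uniquely $xy$ with $x \in G_1$, $y \in G_2$, we get $|\Eq(\phi,\psi)| = |\Eq(\phi_1,\psi_1)| \cdot |G_2|$, so $\agr(\phi,\psi) = |\Eq(\phi_1,\psi_1)|/|G_1| = \agr(\phi_1,\psi_1) \le \Lambda_{G_1,H}$. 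Taking the maximum over $\phi,\psi$ yields $\Lambda_{G,H} \le \Lambda_{G_1,H}$.

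I do not expect a serious obstacle here: the only point requiring a little care is making sure the factorization argument uses multiplicativity across the two factors (legitimate for a true homomorphism) rather than assuming $G$ is a direct or semidirect product, and then correctly bookkeeping the affine-to-homomorphism reduction at both the start and the end. One should also note the degenerate case $|\Hom(G,H)| = 1$, where $\Lambda_{G,H} = 0$ and the inequality is immediate, so the substantive content is entirely in the case $|\Hom(G,H)| > 1$.
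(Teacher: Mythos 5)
Your proof is correct and follows essentially the same route as the paper: write the affine homomorphism as a left translate $a\phi_0$ of a genuine homomorphism, observe that $\phi_0|_{G_2}$ is trivial since $|\Hom(G_2,H)|=1$, and use multiplicativity across the factors to get $\phi(xy)=a\phi_0(x)$. The only difference is that you explicitly carry out the agreement bookkeeping ($|\Eq(\phi,\psi)| = |\Eq(\phi_1,\psi_1)|\cdot|G_2|$ via unique factorization) to deduce $\Lambda_{G,H}\le\Lambda_{G_1,H}$, a step the paper leaves implicit; this is a sound and welcome addition, not a different approach.
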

\begin{proof}
Suppose $\phi \in \aHom(G,H)$. Then there is some $a \in H$ and some
$\phi_0 \in \Hom(G,H)$ such that $\phi(xy) = a\phi_0(x)\phi_0(y)$ for
every $x \in G_1$ and $y \in G_2$. The restriction of $\phi_0$ to $G_2$
is a homomorphism from $G_2 \to H$, which is trivial by
assumption.
The restriction of $\phi_0$ to $G_1$ is also a homomorphism from $G_1 \to H$.
Thus, $\phi(xy) = \psi(x)$
where $\psi \in \aHom(G_1,H)$ is defined by $\psi(x) = a\phi_0(x)$.
\end{proof}

\subsection{Direct products}

\begin{proposition}
\label{proposition:product lambda}
If $G,H,G_1,G_2,H_1,H_2$ are finite groups, then
\begin{enumerate}
\item%
$\Lambda_{G,H_1 \times H_2} = \max\{\Lambda_{G,H_1},\Lambda_{G,H_2}\}$
\item%
$\Lambda_{G_1 \times G_2,H} = \max\{\Lambda_{G_1,H},\Lambda_{G_1,H}\}$
\end{enumerate}
\end{proposition}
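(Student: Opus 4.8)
The plan is to prove both parts by establishing a matching pair of inequalities in each case, using the earlier structural results as much as possible. For part (1), the inequality $\Lambda_{G,H_1 \times H_2} \ge \max\{\Lambda_{G,H_1},\Lambda_{G,H_2}\}$ is immediate from Proposition~\ref{proposition:range subgroup lambda}, since each $H_i$ embeds as a subgroup of $H_1 \times H_2$ (via $h \mapsto (h,1)$ or $(1,h)$). For the reverse inequality, I would take $\phi,\psi \in \Hom(G, H_1 \times H_2)$ with $\agr(\phi,\psi) = \Lambda_{G,H_1\times H_2}$ (legitimate by Proposition~\ref{proposition:lambda hom}), and compose with the two coordinate projections $\pi_i : H_1 \times H_2 \to H_i$ to get $\phi_i \triangleq \pi_i \circ \phi$ and $\psi_i \triangleq \pi_i \circ \psi$ in $\Hom(G,H_i)$. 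The key observation is that $x \in \Eq(\phi,\psi)$ if and only if $\phi_1(x) = \psi_1(x)$ \emph{and} $\phi_2(x) = \psi_2(x)$, i.e. $\Eq(\phi,\psi) = \Eq(\phi_1,\psi_1) \cap \Eq(\phi_2,\psi_2)$. Hence $|\Eq(\phi,\psi)| \le \min_i |\Eq(\phi_i,\psi_i)|$, so $\agr(\phi,\psi) \le \min_i \agr(\phi_i,\psi_i) \le \max_i \Lambda_{G,H_i}$ — provided we can ensure that in at least one coordinate the two projected homomorphisms are distinct, so that $\agr(\phi_i,\psi_i) \le \Lambda_{G,H_i}$ is actually licensed. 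Since $\phi \ne \psi$, there is some $i$ with $\phi_i \ne \psi_i$, and for that $i$ we get $\agr(\phi,\psi) \le \agr(\phi_i,\psi_i) \le \Lambda_{G,H_i}$, which finishes part (1).

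For part (2) (reading the right-hand side as $\max\{\Lambda_{G_1,H},\Lambda_{G_2,H}\}$, correcting the evident typo), the lower bound $\Lambda_{G_1 \times G_2, H} \ge \max\{\Lambda_{G_1,H},\Lambda_{G_2,H}\}$ follows from Proposition~\ref{proposition:quotient lambda}, since $G_i$ is a quotient of $G_1 \times G_2$ (project onto the $i$-th factor). The upper bound is exactly the content of Proposition~\ref{proposition:product lambda domain} applied to the internal Zappa--Sz\'ep (here direct) decomposition $G_1 \times G_2 = G_1 \bowtie G_2$: a direct product is in particular a Zappa--Sz\'ep product of its two factors, so that proposition gives $\Lambda_{G_1 \times G_2, H} \le \max\{\Lambda_{G_1,H},\Lambda_{G_2,H}\}$ for free. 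Thus part (2) requires essentially no new argument beyond invoking the two earlier propositions.

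The only part that demands genuine care is the upper bound in part (1), and within it the point of possible slippage is the quantifier handling: one must not claim $\agr(\phi_i,\psi_i) \le \Lambda_{G,H_i}$ when $\phi_i = \psi_i$ (in which case $\agr(\phi_i,\psi_i) = 1$ and the bound is false), so the argument has to route through the coordinate where the projections genuinely differ, and the cleaner-looking ``$\min$'' estimate over both coordinates must be replaced by the ``pick the good coordinate'' estimate. This is a minor subtlety rather than a real obstacle; no averaging or group-theoretic input beyond the projection maps and the First Isomorphism Theorem (already available) is needed. I expect the whole proof to be a few lines once these observations are assembled in the order above.
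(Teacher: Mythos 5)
Your proof is correct and takes essentially the same approach as the paper: for part (1) the subgroup embedding gives the lower bound and the coordinate decomposition $\phi=(\phi_1,\phi_2)$ together with choosing the coordinate where the projections actually differ gives the upper bound, while for part (2) the upper bound is Proposition~\ref{proposition:product lambda domain} applied to the direct (hence Zappa--Sz\'ep) decomposition. The only cosmetic difference is that for the lower bound in (2) you invoke Proposition~\ref{proposition:quotient lambda} via the projection $G_1\times G_2 \to G_i$, whereas the paper writes out the equivalent construction $\phi(x,y)=\phi_1(x)$, $\psi(x,y)=\psi_1(x)$ explicitly; both are the same argument, and you also correctly note the typo in the statement of part (2).
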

\begin{proof}
\begin{enumerate}
\item%
Since $H_1$ is isomorphic to the subgroup
$H_1 \times \{1_{H_2}\} \le H_1 \times H_2$,
it follows from Proposition~\ref{proposition:range subgroup lambda} that
$\Lambda_{G,H_1 \times H_2} \ge \max\{\Lambda_{G,H_1},\Lambda_{G,H_2}\}$.
For the reverse bound, if $|\Hom(G,H_1 \times H_2)| = 1$, then it is trivial,
so assume $|\Hom(G,H_1 \times H_2)| > 1$.
By Proposition~\ref{proposition:lambda hom}, there exist
$\phi,\psi \in \Hom(G,H)$ with $\agr(\phi,\psi) = \Lambda_{G,H_1 \times H_2}$.
Write $\phi = (\phi_1,\phi_2)$ and $\psi = (\psi_1,\psi_2)$ where
$\phi_i,\psi_i:G \to H_i$ for $i \in \{1,2\}$.
Then $\agr(\phi_1,\psi_1), \agr(\phi_2,\psi_2) \ge \agr(\phi,\psi)
= \Lambda_{G,H_1 \times H_2}$. Moreover, since $\phi \ne \psi$, we have
$\phi_i \ne \psi_i$ for at least one of the $i \in \{1,2\}$. Therefore,
$\Lambda_{G,H_1 \times H_2} \le \agr(\phi_i,\psi_i) \le
\Lambda_{G,H_i} \le \max\{\Lambda_{G,H_1},\Lambda_{G,H_2}\}$.

\item%
Since direct products are Zappa-Sz\'ep products, it follows from
Proposition~\ref{proposition:product lambda domain} that
$\Lambda_{G_1 \times G_2,H} \le \max\{\Lambda_{G_1,H},\Lambda_{G_2,H}\}$.
For the reverse bound, assume without loss of generality that
$\Lambda_{G_1,H} \ge \Lambda_{G_2,H}$.
If $|\Hom(G_1,H)| = 1$, then the bound is trivial, so assume
$|\Hom(G_1,H)| > 1$. By Proposition~\ref{proposition:lambda hom},
there exist $\phi_1,\psi_1 \in \Hom(G_1,H)$ such that
$\agr(\phi,\psi) = \Lambda_{G_1,H}$.
Define $\phi,\psi:G_1 \times G_2 \to H$ by
$\phi(x,y) \triangleq \phi_1(x)$ and $\psi(x,y) \triangleq \psi_1(x)$.
Then $\phi,\psi \in \Hom(G_1 \times G_2,H)$, so
$\Lambda_{G_1 \times G_2,H} \ge \agr(\phi,\psi) = \agr(\phi_1,\psi_1)
= \Lambda_{G_1,H} \ge \max\{\Lambda_{G_1,H},\Lambda_{G_2,H}\}$.
\end{enumerate}
\end{proof}

\subsection{Key facts}

Here we prove some key facts that will help us characterize $\Lambda_{G,H}$
when $G$ is solvable.

\begin{lemma}
\label{lemma:lambda upper bound}
If $G$ and $H$ are finite groups and $p$ is the smallest prime divisor of
$|G|$, then
\[
\Lambda_{G,H} \le \frac1p.
\]
\end{lemma}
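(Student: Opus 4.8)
The plan is to reduce to the case of two genuine homomorphisms and then exploit the fact that a homomorphism from $G$ whose image is nontrivial must have a proper normal kernel, which by Lagrange forces the index of any equalizer to have a prime divisor, and the smallest such divisor is at least $p$. Concretely, by Proposition~\ref{proposition:lambda hom} I may assume $\Lambda_{G,H} = \agr(\phi,\psi)$ for some distinct $\phi,\psi \in \Hom(G,H)$ (if $|\Hom(G,H)| = 1$ then $\Lambda_{G,H} = 0$ and we are done). The equalizer $\Eq(\phi,\psi)$ is a proper subgroup of $G$ by Proposition~\ref{proposition:equalizer is coset} (proper because $\phi \neq \psi$), so $[G : \Eq(\phi,\psi)] \ge 2$ divides $|G|$ by Lagrange's theorem.

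The key step is to upgrade ``$[G : \Eq(\phi,\psi)]$ divides $|G|$ and is at least $2$'' to ``$[G : \Eq(\phi,\psi)] \ge p$''. This is immediate: any integer $m \ge 2$ dividing $|G|$ has every prime factor dividing $|G|$, hence every prime factor of $m$ is at least $p$, hence $m \ge p$. Therefore
\[
\Lambda_{G,H} = \frac{|\Eq(\phi,\psi)|}{|G|} = \frac{1}{[G : \Eq(\phi,\psi)]} \le \frac1p,
\]
which completes the argument.

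I do not expect any real obstacle here; the only thing to be careful about is the degenerate case $|\Hom(G,H)| = 1$, where $\Lambda_{G,H}$ is defined via affine homomorphisms and equals $0$ (every affine homomorphism is constant, so any two distinct ones disagree everywhere), and the bound $0 \le 1/p$ holds trivially. One could alternatively phrase the whole thing without invoking Proposition~\ref{proposition:lambda hom} by working directly with affine homomorphisms and Proposition~\ref{proposition:equalizer is coset}, but routing through homomorphisms is cleaner. Note this lemma is a strengthening of the spirit of Proposition~\ref{proposition:index of equalizer to p-group} in the direction of $G$ rather than $H$, and it is exactly what pins down the ``$\min$'' in Theorem~\ref{theorem:main}: the index of the equalizer is an element of $\cN_G$ whose prime factors lie in $\cP_{G,H}$, so $\Lambda_{G,H} \le 1/(\min \cP_{G,H} \cap \cN_G)$ whenever that intersection is nonempty.
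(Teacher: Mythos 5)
Your proof is correct and follows essentially the same route as the paper: the equalizer is (a coset of) a subgroup by Proposition~\ref{proposition:equalizer is coset}, so by Lagrange its index divides $|G|$ and exceeds $1$, hence is at least the smallest prime divisor $p$. The only cosmetic difference is that you first pass to genuine homomorphisms via Proposition~\ref{proposition:lambda hom}, whereas the paper argues directly with distinct affine homomorphisms and the coset statement; both yield the same bound.
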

\begin{proof}
Suppose $\phi,\psi \in \aHom(G,H)$ are distinct.
By Proposition~\ref{proposition:equalizer is coset}, $\Eq(\phi,\psi)$
is a coset of a subgroup $S$ of $G$, and hence $|\Eq(\phi,\psi)|
= |S|$. By Lagrange's theorem,
$|G|/|S|$ is a divisor of $|G|$, and since $\phi \ne \psi$ it must be
greater than $1$, hence $|G|/|S| \ge p$, so $\agr(\phi,\psi) =
\frac{|\Eq(\phi,\psi)|}{|G|} = \frac{|S|}{|G|} \le \frac1p$.
\end{proof}

\begin{lemma}
\label{lemma:lambda lower bound}
If $G$ has a normal subgroup of index $p$ and $p$ divides $|H|$, then
\[
\Lambda_{G,H} \ge \frac1p.
\]
\end{lemma}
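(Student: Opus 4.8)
The goal is to exhibit two distinct affine homomorphisms $\phi,\psi \in \aHom(G,H)$ whose equalizer has index exactly $p$ in $G$, which by definition forces $\Lambda_{G,H} \ge \frac{1}{p}$. The plan is to take $\psi$ to be the trivial homomorphism (sending everything to $1_H$) and to construct a nontrivial $\phi \in \Hom(G,H)$ whose kernel is precisely the given index-$p$ normal subgroup; then $\Eq(\phi,\psi) = \ker\phi$ has index $p$, giving $\agr(\phi,\psi) = \frac{1}{p}$.

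First I would invoke the hypothesis: let $N \trianglelefteq G$ with $[G:N] = p$. Since $p$ is prime, $G/N$ is cyclic of order $p$, i.e. $G/N \cong \Z/p\Z$. Next, since $p$ divides $|H|$, by Cauchy's theorem $H$ contains an element $h$ of order exactly $p$, and hence a subgroup $\langle h \rangle \cong \Z/p\Z$. Composing the quotient map $\pi: G \to G/N$ with an isomorphism $G/N \xrightarrow{\sim} \langle h \rangle \le H$ yields a homomorphism $\phi: G \to H$ with $\ker\phi = N$ (the isomorphism kills nothing, and $\ker\pi = N$). This $\phi$ is nontrivial because $N \ne G$, so $\phi \ne \psi$ where $\psi$ is the trivial homomorphism. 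Then $\Eq(\phi,\psi) = \{x \in G \mid \phi(x) = 1_H\} = \ker\phi = N$, so
\[
\Lambda_{G,H} \ge \agr(\phi,\psi) = \frac{|\Eq(\phi,\psi)|}{|G|} = \frac{|N|}{|G|} = \frac{1}{p}.
\]

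There is no real obstacle here: the only ingredient beyond the definitions is Cauchy's theorem to guarantee an element of order $p$ in $H$, and the observation that any group of prime index has a cyclic (hence easily mapped-onto) quotient. One minor point to state carefully is that the trivial homomorphism lies in $\aHom(G,H)$ and that $\phi \ne \psi$ as functions, which holds since $\phi$ is surjective onto a nontrivial subgroup $\langle h\rangle$; with $H$ nontrivial and $p \mid |H|$, this is automatic. It is also worth noting that this matches the upper bound from Proposition~\ref{proposition:index of equalizer to p-group} and Lemma~\ref{lemma:lambda upper bound} in the relevant cases, which is the point of pairing this lemma with the earlier ones.
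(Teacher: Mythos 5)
Your proof is correct and follows essentially the same route as the paper: compose the quotient map $G \to G/N$ with an isomorphism onto a cyclic subgroup $\langle h\rangle \le H$ of order $p$ supplied by Cauchy's theorem, so the resulting homomorphism has kernel $N$ and agrees with the trivial homomorphism on exactly $|G|/p$ elements. Your explicit pairing with the trivial homomorphism is just making precise what the paper leaves implicit.
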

\begin{proof}
Let $N \triangleleft G$ be a normal subgroup of index $p$.
Let $\phi_1: G \to G/N$ be the natural quotient homomorphism.
Since $p$ divides $|H|$, by Cauchy's theorem, there is an element $h \in H$ of
order $p$. The subgroup $\langle h \rangle \le H$ generated by $h$ is 
isomorphic to $\Z_p$, and since $G/N$ has order $p$, it is also isomorphic to
$\Z_p$, hence there is an isomorphism $\phi_2: G/N \to \langle h \rangle$.
Define $\phi:G \to H$ to be the composition $\phi = \phi_2 \circ \phi_1$.
Since $\phi_1,\phi_2$ are homomorphisms, $\phi$ is a homomorphism, and
moreover since $\phi_2$ is an isomorphism, $\ker\phi = \ker\phi_1 = N$.
Therefore, $|\ker\phi| = |N| = |G|/p$.
\end{proof}

\begin{proposition}
\label{proposition:coprime lambda}
If $G$ and $H$ are finite groups and $\gcd(|G|,|H|) = 1$, then
$\aHom(G,H)$ consists of constant functions. In particular,
\[
\Lambda_{G,H} = 0.
\]
\end{proposition}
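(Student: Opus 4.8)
The plan is to show that any homomorphism $\phi \in \Hom(G,H)$ is trivial when $\gcd(|G|,|H|) = 1$, and then lift this to $\aHom(G,H)$. For a homomorphism $\phi$, the First Isomorphism Theorem gives $G/\ker\phi \cong \im\phi \le H$, so $[G:\ker\phi] = |\im\phi|$ divides both $|G|$ (by Lagrange's theorem) and $|H|$. Since $\gcd(|G|,|H|) = 1$, this forces $[G:\ker\phi] = 1$, i.e. $\ker\phi = G$, so $\phi$ is the trivial homomorphism sending everything to $1_H$. Hence $\Hom(G,H) = \{1\}$ consists of a single (constant) function.

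For the affine case, recall that $\phi \in \aHom(G,H)$ means $\phi(g) = h\phi_0(g)$ for some $h \in H$ and $\phi_0 \in \Hom(G,H)$. By the previous paragraph $\phi_0$ is trivial, so $\phi(g) = h$ for all $g \in G$, i.e. $\phi$ is the constant function with value $h$. Thus $\aHom(G,H)$ consists exactly of the $|H|$ constant functions.

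Finally, for the distance statement: if $\phi,\psi \in \aHom(G,H)$ are distinct, they are distinct constants, so they agree nowhere, giving $\agr(\phi,\psi) = 0$. Therefore $\Lambda_{G,H} = 0$. (Equivalently, one can invoke Proposition~\ref{proposition:quotient lambda equality} with $S = G$, since $|\Hom(G,H)| = 1$ forces $N = G$ and $\Lambda_{G,H} = \Lambda_{G/G, H} = \Lambda_{\{1\},H} = 0$, but the direct argument above is cleaner.)

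There is no real obstacle here; the only point requiring a moment's care is the passage from homomorphisms to affine homomorphisms, which is immediate from the definition. The core is the divisibility argument: $[G:\ker\phi]$ divides $\gcd(|G|,|H|) = 1$.
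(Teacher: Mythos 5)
Your proof is correct and follows essentially the same route as the paper: the First Isomorphism Theorem plus Lagrange shows $|\im\phi| = [G:\ker\phi]$ divides $\gcd(|G|,|H|)=1$, so every homomorphism is trivial, and the affine and distance conclusions follow immediately. The paper leaves the affine and agreement steps implicit with "it suffices to show"; your spelling them out is fine but adds nothing new.
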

\begin{proof}
It suffices to show that the only homomorphism $\phi:G \to H$ is the trivial
map $1_H$. If $\phi \in \Hom(G,H)$, then $G/\ker\phi \cong \im\phi$. Moreover,
since $\ker\phi \le G$ and $\im\phi \le H$, $|\im\phi| = |G|/|\ker\phi|$
divides both $|G|$ and $|H|$, hence $|\im\phi| = 1$ and so $\im\phi = \{1_H\}$.
\end{proof}

\section{Nilpotent codomain}
\label{section:distance nilpotent range}

In this section, we prove Theorem~\ref{theorem:main} when $H$ is nilpotent.

We begin by considering the case where $G$ has no normal subgroups of index $p$
for any prime $p$ dividing $\gcd(|G|,|H|)$.
The following fact will be useful.

\begin{proposition}
\label{proposition:solvable maximal normal subgroup index}
If $G$ is a finite solvable group and $N \triangleleft G$ is a maximal normal
subgroup, then $N$ has prime index in $G$.
\end{proposition}

We proceed to prove that $\Lambda_{G,H} = 0$. In fact, we prove it for the case
where $H$ is solvable.

\begin{proposition}
\label{proposition:solvable range no normal subgroup lambda}
Let $G$ and $H$ be finite groups, with $H$ solvable.
If $G$ has no normal subgroup of index $p$ for any prime $p$ dividing
$\gcd(|G|,|H|)$, then $|\Hom(G,H)|=1$ and in particular
\[
\Lambda_{G,H} = 0.
\]
\end{proposition}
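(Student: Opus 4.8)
The plan is to show that any homomorphism $\phi : G \to H$ must be trivial, from which $\Lambda_{G,H} = 0$ follows immediately (a code with a single codeword has no distinct pairs, and including affine homomorphisms does not change this since $\aHom(G,H)$ then consists of the $|H|$ constant functions, which pairwise agree nowhere). So the entire content is: under the hypothesis, $|\Hom(G,H)| = 1$.

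Suppose for contradiction that $\phi : G \to H$ is a nontrivial homomorphism. Then $\im\phi$ is a nontrivial subgroup of $H$, hence a nontrivial solvable group (subgroups of solvable groups are solvable). A nontrivial finite solvable group has a maximal normal subgroup, and by Proposition~\ref{proposition:solvable maximal normal subgroup index} that subgroup has prime index, say index $q$. Let $M \triangleleft \im\phi$ be this maximal normal subgroup; then $q$ is a prime dividing $|\im\phi|$, which in turn divides $|H|$, so $q \mid |H|$. Pulling back along $\phi$: set $N \triangleq \phi^{-1}(M)$. Since $M \triangleleft \im\phi$, $N$ is a normal subgroup of $G$, and the induced map $G/N \to (\im\phi)/M$ is an isomorphism, so $[G : N] = q$. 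Because $|\im\phi| = [G : N] \cdot |\ker\phi \cdot \text{(stuff)}|$ — more directly, $q = [G:N]$ divides $|G|$ by Lagrange, so $q \mid \gcd(|G|,|H|)$. Thus $G$ has a normal subgroup of index $q$ with $q$ a prime dividing $\gcd(|G|,|H|)$, contradicting the hypothesis.

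Therefore the only homomorphism $G \to H$ is trivial, giving $|\Hom(G,H)| = 1$ and $\Lambda_{G,H} = 0$.

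I expect the only delicate point to be the step ``$G/N \cong (\im\phi)/M$ hence $[G:N] = q$,'' which is a routine application of the correspondence theorem together with the First Isomorphism Theorem: $\phi$ induces a surjection $G \to \im\phi \to (\im\phi)/M$ whose kernel is exactly $\phi^{-1}(M) = N$. Everything else is bookkeeping with Lagrange's theorem and the fact that subgroups and quotients of solvable groups are solvable. Note the argument uses solvability of $H$ only through $\im\phi$; the hypothesis on $G$ is used only at the very end to derive the contradiction, and no solvability of $G$ is needed.
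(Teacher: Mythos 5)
Your proof is correct and follows essentially the same route as the paper: both arguments apply Proposition~\ref{proposition:solvable maximal normal subgroup index} to the solvable group $\im\phi \cong G/\ker\phi$ to produce a normal subgroup of $G$ of prime index $p$ dividing $\gcd(|G|,|H|)$, contradicting the hypothesis. The only cosmetic difference is that you choose the maximal normal subgroup inside $\im\phi$ and pull it back along $\phi$, whereas the paper chooses a maximal proper normal subgroup of $G$ containing $\ker\phi$ and transfers maximality downward via the Lattice Theorem.
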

\begin{proof}
Suppose $\phi \in \Hom(G,H)$ is nontrivial. Then $\ker\phi \triangleleft G$
is a proper normal subgroup of $G$, and $G/\ker\phi \cong \im\phi$ which is
a subgroup of $H$, and hence solvable. Let $N \triangleleft G$ be a maximal
proper normal subgroup of $G$ containing $\ker\phi$. By the Lattice Theorem,
$N/\ker\phi \triangleleft G/\ker\phi$ is a maximal proper normal subgroup,
so by the Second Isomorphism Theorem and
Proposition~\ref{proposition:solvable maximal normal subgroup index},
$[G:N] = [G/\ker\phi : N/\ker\phi] = p$ for some prime $p$ dividing
$|G/\ker\phi| = |G|/|\ker\phi|$. In particular, $p$ divides $|G|$.
But $p = [G:N]$ divides $[G:\ker\phi] = |\im\phi|$, which divides $|H|$, so $p$
divides $\gcd(|G|,|H|)$. The existence of $N$ contradicts our hypothesis,
so $\phi$ must be trivial.
\end{proof}

This does not hold in general as, for instance, when
$G = H = A_n$ for $n \ge 5$, which is a non-abelian simple group,
$G$ has no normal subgroups of prime index, yet there are certainly
nontrivial homomorphisms $A_n \to A_n$.

Now we proceed to the case where $G$ has a normal subgroup of index $p$ for
some prime $p$ dividing $\gcd(|G|,|H|)$.
We use the well-known fact that finite nilpotent groups are direct
products of their Sylow subgroups~\cite[Ch 6, Theorem 3]{DF04}.

\begin{theorem}
\label{theorem:nilpotent range lambda}
If $G$ is a finite group, $H$ is a finite nilpotent group, and $p$ is the
smallest prime divisor of $\gcd(|G|,|H|)$ such that $G$ has a normal subgroup
of index $p$, then
\[
\Lambda_{G,H} = \frac1p.
\]
\end{theorem}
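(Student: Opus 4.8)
The plan is to prove the two inequalities $\Lambda_{G,H}\ge 1/p$ and $\Lambda_{G,H}\le 1/p$ separately. The lower bound is immediate from Lemma~\ref{lemma:lambda lower bound}: by hypothesis $p\mid\gcd(|G|,|H|)$, so in particular $p\mid|H|$, and $G$ has a normal subgroup of index $p$; hence $\Lambda_{G,H}\ge 1/p$.

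For the upper bound I would exploit nilpotency of $H$ via its Sylow decomposition. Write $H=\prod_{q\mid|H|}H_q$, where $H_q$ is the (nontrivial) Sylow $q$-subgroup of $H$; this product decomposition is precisely where finite nilpotence of $H$ enters. Iterating the first part of Proposition~\ref{proposition:product lambda} gives $\Lambda_{G,H}=\max_{q\mid|H|}\Lambda_{G,H_q}$, so it suffices to show $\Lambda_{G,H_q}\le 1/p$ for every prime $q$ dividing $|H|$, which I would do by a case analysis on how $q$ compares with $|G|$ and with $p$.

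If $q\nmid|G|$, then $\gcd(|G|,|H_q|)=1$ and Proposition~\ref{proposition:coprime lambda} gives $\Lambda_{G,H_q}=0$. If $q\mid|G|$ and $q\ge p$, then since $H_q$ is a $q$-group, Proposition~\ref{proposition:index of equalizer to p-group} gives $\Lambda_{G,H_q}\le 1/q\le 1/p$. The last case is $q\mid|G|$ and $q<p$: here $q$ is a prime divisor of $\gcd(|G|,|H|)$ strictly smaller than $p$, so by the minimality in the definition of $p$ the group $G$ has no normal subgroup of index $q$. Since $H_q$ is a $q$-group, the only prime dividing $\gcd(|G|,|H_q|)$ is $q$ itself, so the hypothesis of Proposition~\ref{proposition:solvable range no normal subgroup lambda} --- that $G$ have no normal subgroup of index $p'$ for any prime $p'$ dividing $\gcd(|G|,|H_q|)$ --- is satisfied; as $H_q$ is nilpotent, hence solvable, that proposition yields $|\Hom(G,H_q)|=1$, so $\Lambda_{G,H_q}=0$. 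In all three cases $\Lambda_{G,H_q}\le 1/p$, hence $\Lambda_{G,H}\le 1/p$, and together with the lower bound we obtain $\Lambda_{G,H}=1/p$.

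I expect the only delicate point to be the bookkeeping in this case analysis, in particular noticing that for a $q$-group codomain the multi-prime hypothesis of Proposition~\ref{proposition:solvable range no normal subgroup lambda} collapses to the single condition ``$G$ has no normal subgroup of index $q$'', which is exactly what the minimality of $p$ supplies in the range $q<p$. Everything else is a direct invocation of results already established.
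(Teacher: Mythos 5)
Your proposal is correct and follows essentially the same route as the paper: lower bound from Lemma~\ref{lemma:lambda lower bound}, Sylow decomposition of the nilpotent codomain combined with Proposition~\ref{proposition:product lambda}, and a per-prime case analysis using Propositions~\ref{proposition:coprime lambda}, \ref{proposition:index of equalizer to p-group}, and \ref{proposition:solvable range no normal subgroup lambda}. Your explicit three-way split (in particular the separate handling of $q \nmid |G|$ and of $q \ge p$ via the $p$-group bound) just spells out cases the paper treats more tersely.
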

\begin{proof}
The lower bound follows from Lemma~\ref{lemma:lambda lower bound} so it
suffices to show the upper bound.
Write $H = P_1 \times \cdots \times P_r$ where $P_i$ is the
Sylow $p_i$-subgroup of $H$, and the $p_i$ are distinct.
If $p_i < p$, then $G$ has no normal subgroup of index $p_i$ by assumption,
so by Proposition~\ref{proposition:solvable range no normal subgroup lambda}
it follows that $\Lambda_{G,P_i} = 0$.
On the other hand, if $G$ has a normal subgroup of index $p_i$, then
it follows from Proposition~\ref{proposition:index of equalizer to p-group}
and Lemma~\ref{lemma:lambda lower bound} that
$\Lambda_{G,P_i} = \frac{1}{p_i}$.
Therefore, by Proposition~\ref{proposition:product lambda}, it follows that
$\Lambda_{G,H} = \max_i \Lambda_{G,P_i} = \frac{1}{p}$.
\end{proof}

\section{Solvable domain}
\label{section:distance solvable groups}

In this section, we prove Theorem~\ref{theorem:main} when $G$ is solvable.
As in Section~\ref{section:distance nilpotent range}, we begin by considering
the case where $G$ has no normal subgroups of index $p$ for any prime $p$
dividing $\gcd(|G|,|H|)$.

\begin{proposition}
\label{proposition:no normal subgroup lambda}
Let $G$ be a finite solvable group and let $H$ be any finite group.
If $G$ has no normal subgroup of index $p$ for any prime $p$ dividing
$\gcd(|G|,|H|)$, then $|\Hom(G,H)| = 1$ and in particular
\[
\Lambda_{G,H} = 0.
\]
\end{proposition}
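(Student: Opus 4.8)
The plan is to mimic the proof of Proposition~\ref{proposition:solvable range no normal subgroup lambda}, but now exploiting the solvability of $G$ rather than of $H$. Suppose for contradiction that $\phi \in \Hom(G,H)$ is nontrivial. Then $\ker\phi \triangleleft G$ is a proper normal subgroup, so we may choose a normal subgroup $N \triangleleft G$ that is maximal among the proper normal subgroups of $G$ containing $\ker\phi$. By the Lattice Theorem (Correspondence Theorem), $N/\ker\phi$ is a maximal proper normal subgroup of $G/\ker\phi$.

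The key point is that $G/\ker\phi$ is solvable: quotients of solvable groups are solvable, and $G$ is solvable by hypothesis. Hence by Proposition~\ref{proposition:solvable maximal normal subgroup index} applied to the solvable group $G/\ker\phi$ and its maximal normal subgroup $N/\ker\phi$, the index $[G/\ker\phi : N/\ker\phi]$ is a prime $p$. By the Third Isomorphism Theorem this index equals $[G:N]$, so $[G:N] = p$. Since $N \supseteq \ker\phi$, this $p$ divides $[G:\ker\phi] = |G|/|\ker\phi|$, which in turn divides $|G|$; and $[G:\ker\phi] = |\im\phi|$ divides $|H|$ because $\im\phi$ is a subgroup of $H$ (by the First Isomorphism Theorem and Lagrange). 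Therefore $p$ divides $\gcd(|G|,|H|)$, and $N$ is a proper normal subgroup of $G$ of index $p$, contradicting the hypothesis. Thus $\phi$ is trivial, $|\Hom(G,H)| = 1$, and consequently $\aHom(G,H)$ consists only of constant maps, giving $\Lambda_{G,H} = 0$.

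I do not anticipate a serious obstacle here: the argument is essentially identical to that of Proposition~\ref{proposition:solvable range no normal subgroup lambda}, with the roles of the two hypotheses swapped — there we used that $\im\phi \le H$ is solvable to invoke Proposition~\ref{proposition:solvable maximal normal subgroup index}, whereas here we use that $G$ (hence every quotient of $G$) is solvable. The only point requiring a moment's care is verifying that solvability descends to $G/\ker\phi$, which is a standard fact about solvable groups; everything else is bookkeeping with the isomorphism theorems already used elsewhere in the paper.
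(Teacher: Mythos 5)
Your argument is correct and takes essentially the same approach as the paper: pick a maximal proper normal subgroup $N$ of $G$ containing $\ker\phi$ and show via Proposition~\ref{proposition:solvable maximal normal subgroup index} that it has prime index $p$ dividing $\gcd(|G|,|H|)$, contradicting the hypothesis. The only (harmless) difference is that, since $G$ itself is solvable, the paper applies that proposition directly to $G$ and $N$, so your detour through the quotient $G/\ker\phi$ via the Lattice and Third Isomorphism Theorems --- needed in Proposition~\ref{proposition:solvable range no normal subgroup lambda} where only $\im\phi \le H$ is known to be solvable --- is unnecessary here.
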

\begin{proof}
Suppose $\phi \in \Hom(G,H)$ is nontrivial.
Then $\ker\phi \triangleleft G$ is a proper normal subgroup of $G$, and
$G/\ker\phi$ is isomorphic to a subgroup of $H$, by the First Isomorphism
Theorem.
In particular, $[G:\ker\phi]$ divides $|H|$.
Let $N \triangleleft G$ be a maximal proper normal subgroup of $G$ containing
$\ker\phi$.
By Proposition~\ref{proposition:solvable maximal normal subgroup index},
$[G:N] = p$ for some prime $p$ dividing $|G|$. But
$p = [G:N]$ divides $[G:\ker\phi]$ which divides $|H|$, so
$p$ divides $\gcd(|G|,|H|)$. By our hypothesis, $N$ cannot exist, so $\phi$
must be trivial.
\end{proof}

We proceed to the case where $G$ has a normal subgroup of index $p$ for
some prime $p$ dividing $\gcd(|G|,|H|)$. Let $p$ be the minimal such prime,
so that we wish to show $\Lambda_{G,H} = \frac1p$.
We first consider the special case where every prime divisor of $|G|$ less than
$p$ also divides $|H|$. In this case, we show that $G$ has no subgroups
of index less than $p$, which yields the upper bound. To show this, we use
the following fact, due to Berkovich, found as an exercise
in~\cite{Isaacs}.

\begin{proposition}[{\cite[Exercise 3B.15]{Isaacs}}]
\label{proposition:solvable minimum index subgroup}
Let $G$ be a finite solvable group. Suppose $H < G$ is a proper subgroup
of $G$ with smallest index. Then $H \triangleleft G$.
\end{proposition}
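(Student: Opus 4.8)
The plan is to induct on $|G|$, the case $|G|=1$ being vacuous. Since $G$ is solvable, fix a minimal nontrivial normal subgroup $N \trianglelefteq G$; standard theory gives that $N$ is elementary abelian, hence a $p$-group for some prime $p$. Let $H<G$ be a proper subgroup of smallest index. Note that $H$ is automatically a maximal subgroup, since $H<K<G$ would force $[G:K]<[G:H]$. I split into two cases according to whether $N \le H$.

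\emph{Case $N \le H$.} I pass to $G/N$. Every proper subgroup of $G/N$ is $K/N$ for a unique proper subgroup $N \le K < G$, and $[G/N:K/N]=[G:K]\ge[G:H]=[G/N:H/N]$; hence $H/N$ is a proper subgroup of smallest index in $G/N$. As $G/N$ is solvable with $|G/N|<|G|$, the inductive hypothesis yields $H/N \trianglelefteq G/N$, so $H \trianglelefteq G$ by the correspondence theorem.

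\emph{Case $N \not\le H$.} Then $HN$ is a subgroup of $G$ (as $N \trianglelefteq G$) strictly containing $H$, so minimality of $[G:H]$ forces $HN=G$. Moreover $H \cap N$ is normalized by $H$ (since $N \trianglelefteq G$) and by $N$ (since $N$ is abelian), so $H \cap N \trianglelefteq HN = G$; since $H \cap N \ne N$, minimality of $N$ gives $H \cap N = 1$ and hence $[G:H]=|N|$. It now suffices to prove that $H$ centralizes $N$: then $n$ commutes with every element of $H$ for each $n \in N$, so for $g=hn$ with $h\in H$, $n\in N$ we get $gHg^{-1}=h(nHn^{-1})h^{-1}=hHh^{-1}=H$, i.e. $H \trianglelefteq G$. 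To show $H$ centralizes $N$, take any $v \in N$ with $v \ne 1$ and let $C_H(v)=\{h\in H: hvh^{-1}=v\}$. Since $C_H(v)\cap N \le H\cap N=1$, the subgroup $C_H(v)N$ of $G$ has index $[G:C_H(v)N]=|G|/(|C_H(v)|\,|N|)=|H|/|C_H(v)|=[H:C_H(v)]$. If $C_H(v)\ne H$ this is a proper subgroup, so minimality of $[G:H]$ gives $[H:C_H(v)]\ge[G:H]=|N|$; but $[H:C_H(v)]$ is the size of the conjugation orbit of $v$ under $H$, which lies in $N\setminus\{1\}$ and so has size at most $|N|-1$, a contradiction. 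Hence $C_H(v)=H$ for every $v\in N$, i.e. $H$ centralizes $N$.

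The routine ingredients are the isomorphism and correspondence theorems together with the minimality of $N$. The one point requiring an idea is the last case: one must hit upon the right family of test subgroups, namely $C_H(v)N$ for $v \in N$, and observe that their indices in $G$ coincide with the $H$-conjugation orbit sizes of elements of $N$, which are too small to be consistent with $H$ having smallest index unless $H$ centralizes $N$.
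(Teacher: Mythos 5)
Your proof is correct. Note that the paper does not actually prove this proposition: it is stated as a fact due to Berkovich and cited to \cite[Exercise 3B.15]{Isaacs}, so there is no in-paper argument to compare against; your write-up supplies a self-contained proof of the cited fact. The argument is the standard one and all the steps check out: $H$ is maximal since any intermediate subgroup would have smaller index; in the case $N\le H$ the correspondence theorem transfers both the ``smallest index'' hypothesis and the conclusion between $G$ and $G/N$, so induction applies; in the case $N\not\le H$ you correctly get $HN=G$ by maximality, $H\cap N\trianglelefteq G$ (using that $N$ is abelian) and hence $H\cap N=1$ by minimality of $N$, so $[G:H]=|N|$; and the key counting step is sound, since $C_H(v)\cap N=1$ gives $[G:C_H(v)N]=[H:C_H(v)]$, which equals the size of the $H$-conjugation orbit of $v$, a subset of $N\setminus\{1\}$, so it cannot be $\ge |N|$, forcing $C_H(v)=H$ for every $v\in N$ and hence $H\trianglelefteq HN=G$. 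The only cosmetic remark is that you use only that $N$ is abelian, not that it is elementary abelian, and one could add that this argument in fact shows $[G:H]=|N|$ is a prime power, a slightly stronger conclusion than needed here.
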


We now prove the upper bound for the special case.

\begin{lemma}
\label{lemma:solvable dividing lambda}
Suppose $G$ is a finite solvable group, $H$ is any group, and $p$ is the
smallest prime divisor of $\gcd(|G|,|H|)$ such that $G$ has a normal subgroup
of index $p$.
If every prime less than $p$ dividing $|G|$ also divides $|H|$, then
\[
\Lambda_{G,H} \le \frac1p.
\]
\end{lemma}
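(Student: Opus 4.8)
The plan is to show that under the hypotheses, $G$ has no proper subgroup of index less than $p$, and then derive the upper bound on $\Lambda_{G,H}$ directly from Lemma~\ref{lemma:lambda upper bound}'s style of argument (the equalizer of two distinct affine homomorphisms is a coset of a proper subgroup, so its index is at least the smallest index of a proper subgroup of $G$). So the crux is the purely group-theoretic claim: \emph{if every prime divisor of $|G|$ that is less than $p$ also divides $|H|$, then $G$ has no proper subgroup of index $< p$.}

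To prove this claim, I would argue by contradiction. Suppose $G$ has a proper subgroup of index $< p$, and let $S < G$ be a proper subgroup of \emph{smallest} index $m$; so $m < p$. By Proposition~\ref{proposition:solvable minimum index subgroup} (Berkovich's exercise), $S \triangleleft G$, so $G/S$ is a group of order $m$. Now $G/S$ is nontrivial and solvable (as a quotient of a solvable group), so it has a maximal proper normal subgroup, whose index is some prime $q$ by Proposition~\ref{proposition:solvable maximal normal subgroup index}; pulling back, $G$ has a normal subgroup of index $q$, and $q \mid m < p$, so in particular $q < p$ and $q \mid |G|$. By hypothesis, every prime less than $p$ dividing $|G|$ also divides $|H|$, so $q \mid |H|$, hence $q \mid \gcd(|G|,|H|)$. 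But then $q$ is a prime divisor of $\gcd(|G|,|H|)$ with $q < p$ such that $G$ has a normal subgroup of index $q$, contradicting the minimality of $p$ in the hypothesis. This establishes the claim.

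With the claim in hand, the bound follows quickly: let $\phi,\psi \in \aHom(G,H)$ be distinct. By Proposition~\ref{proposition:equalizer is coset}, $\Eq(\phi,\psi)$ is a coset of a subgroup $T \le G$, and since $\phi \ne \psi$ we have $T \ne G$, so $[G:T]$ is the index of a proper subgroup of $G$, hence $[G:T] \ge p$ by the claim. Therefore
\[
\agr(\phi,\psi) = \frac{|\Eq(\phi,\psi)|}{|G|} = \frac{|T|}{|G|} = \frac{1}{[G:T]} \le \frac1p,
\]
and taking the maximum over all distinct pairs gives $\Lambda_{G,H} \le \frac1p$.

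The main obstacle is the reduction in the claim: one has to get from "$G$ has a proper subgroup of small index" to "$G$ has a \emph{normal} subgroup of small prime index" so as to contradict the minimality of $p$. Proposition~\ref{proposition:solvable minimum index subgroup} is exactly what makes this go through cleanly—without it one could have a small-index subgroup with no small-index normal subgroup contained appropriately. I should double-check that I only ever need normality of the smallest-index subgroup (not of all small-index subgroups), which is indeed all Berkovich's result provides, and that passing to the quotient $G/S$ and applying Proposition~\ref{proposition:solvable maximal normal subgroup index} there, then pulling back, legitimately produces a normal subgroup of $G$ of prime index dividing $m$.
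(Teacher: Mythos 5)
Your proposal is correct and follows essentially the same route as the paper: both use Proposition~\ref{proposition:solvable minimum index subgroup} together with Proposition~\ref{proposition:solvable maximal normal subgroup index} to show $G$ has no proper subgroup of index less than $p$ (the paper notes the smallest-index subgroup is itself a maximal normal subgroup, while you take a harmless detour through the quotient $G/S$), and then bound $\Lambda_{G,H}$ via the subgroup/coset structure of the equalizer. The only cosmetic difference is at the end: the paper invokes Lemma~\ref{lemma:lambda lower bound} and Proposition~\ref{proposition:lambda hom} to work with genuine homomorphisms whose equalizer is a nonempty subgroup, whereas your affine-homomorphism version should just add the one-line remark that an empty equalizer gives agreement $0 \le \frac1p$.
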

\begin{proof}
We claim that $G$ has no subgroups of index less than $p$.
Let $S$ be the subgroup with smallest possible index.
By Proposition~\ref{proposition:solvable minimum index subgroup},
$S$ is normal.
Since $S$ is a maximal normal subgroup,
by Proposition~\ref{proposition:solvable maximal normal subgroup index} it
follows that the index $[G:S] = q$ for some prime $q$ dividing $|G|$.
If $q < p$, then our hypotheses imply that $q$ divides $|H|$, so $G$ has a
normal subgroup of prime index less than $p$ dividing $|H|$, contradicting
the minimality of $p$. Thus $[G:S] \ge p$, proving our claim.

By Lemma~\ref{lemma:lambda lower bound}, $|\Hom(G,H)| > 1$, so by
Proposition~\ref{proposition:lambda hom}, there exist homomorphisms
$\phi,\psi \in \Hom(G,H)$ such that $\agr(\phi,\psi) = \Lambda_{G,H}$.
By Proposition~\ref{proposition:equalizer is coset}, $\Eq(\phi,\psi)$ is a
subgroup of $G$, so it follows that
$\Lambda_{G,H} = \agr(\phi,\psi) = 1/[G:\Eq(\phi,\psi)] \le 1/p$.
\end{proof}

We deal with the general case using
the following theorem of Hall~\cite{Hall} characterizing finite solvable groups
as those with Sylow bases.

\begin{theorem}[\cite{Hall}]
\label{theorem:hall solvable}
Let $G$ be a finite group with order prime factorization
$|G| = \prod_{i=1}^m p_i^{e_i}$.
Then $G$ is solvable if and only if it has Sylow $p_i$-subgroups $P_i$ such
that $G = P_1 \bowtie \cdots \bowtie P_m$.
\end{theorem}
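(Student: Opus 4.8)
The plan is to prove the two implications separately; both are classical theorems of Philip Hall, and in each direction essentially all the content is concentrated in a single nontrivial input — Hall's existence theorem for the forward direction and Burnside's $p^aq^b$ theorem for the converse — so I would either cite those as black boxes or develop them as preliminaries. Throughout I read ``$G = P_1 \bowtie \cdots \bowtie P_m$'' in the strong sense: the $P_i$ form a \emph{Sylow basis}, i.e.\ they are pairwise permutable ($P_iP_j = P_jP_i$ for all $i,j$), equivalently $\prod_{i\in S}P_i$ is a subgroup of $G$ for every $S\subseteq\{1,\dots,m\}$; since $\prod_i|P_i| = |G|$, taking $S = \{1,\dots,m\}$ forces $P_1\cdots P_m = G$ with unique factorization. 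This strong reading is what makes the statement true: for $G = A_n$ ($n\ge 5$) one can still write $A_n$ as an iterated internal product of Sylow subgroups with every \emph{prefix} a subgroup, so only pairwise permutability (which fails here — a Sylow $3$ and a Sylow $5$ of $A_5$ do not permute) excludes the non-solvable case.

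\textbf{Solvable $\Rightarrow$ Sylow basis.} Assume $G$ is solvable. By Hall's existence theorem, for each $i$ the group $G$ has a Hall $p_i'$-subgroup $Q_i$ of order $|G|/p_i^{e_i}$. The elementary engine is: if $A,B\le G$ have coprime indices then $[G:A\cap B] = [G:A][G:B]$, hence $|AB| = |G|$ and $AB = G$. Iterating, $[G:\bigcap_{j\in S}Q_j] = \prod_{j\in S}p_j^{e_j}$ for every $S$; in particular $P_i \triangleq \bigcap_{j\ne i}Q_j$ has order $p_i^{e_i}$, so it is a Sylow $p_i$-subgroup. For $S\subseteq\{1,\dots,m\}$ set $T_S \triangleq \bigcap_{j\notin S}Q_j$, a subgroup of order $\prod_{i\in S}p_i^{e_i}$ containing each $P_i$ with $i\in S$. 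One proves $\prod_{i\in S}P_i = T_S$ by induction on $|S|$: picking $i_0\in S$, the subgroups $T_{S\setminus\{i_0\}} = \prod_{i\in S\setminus\{i_0\}}P_i$ and $P_{i_0}$ sit inside $T_S$ with coprime indices $\prod_{i\ne i_0}p_i^{e_i}$ and $p_{i_0}^{e_{i_0}}$ there, so their product is all of $T_S$. Thus every $\prod_{i\in S}P_i$ is a subgroup; taking $|S| = 2$ gives $P_iP_j = P_jP_i$, and $S = \{1,\dots,m\}$ gives $G = P_1\bowtie\cdots\bowtie P_m$.

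\textbf{Sylow basis $\Rightarrow$ solvable.} Conversely, suppose $\{P_1,\dots,P_m\}$ is a Sylow basis of $G$. Then $Q_i \triangleq \prod_{j\ne i}P_j$ is a subgroup of order $|G|/p_i^{e_i}$, i.e.\ a Hall $p_i'$-subgroup, so $G$ has a Hall $p'$-subgroup for every prime $p\mid|G|$. Solvability now follows from Hall's converse theorem, proved by induction on $|G|$: if $|G|$ has at most two distinct prime divisors, $G$ is solvable by Burnside's $p^aq^b$ theorem; otherwise, for $i\ne j$ the intersection $Q_i\cap Q_j$ has index $p_i^{e_i}$ in $Q_j$ (again by the coprime-index identity) and is a Hall $p_i'$-subgroup of $Q_j$, so each proper subgroup $Q_j$ inherits the hypothesis and is solvable by induction; one then upgrades solvability of the $Q_i$ to solvability of $G$ through an analysis of a minimal normal subgroup of some $Q_i$ together with a Frattini argument.

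\textbf{Main obstacle.} The coprime-index computations, the inductive construction of the subgroups $\prod_{i\in S}P_i$, and the verification that the $P_i$ are genuine Sylow subgroups are all routine bookkeeping. The real weight of the statement lies in the two invoked inputs: Hall's existence theorem for the forward direction (itself proved by induction on $|G|$, with the delicate case being when a minimal normal subgroup is already a normal Sylow subgroup, handled via Schur–Zassenhaus), and Burnside's $p^aq^b$ theorem underpinning the converse. If these are granted, the rest is assembly; reproving them — Burnside's theorem in particular — would be the bulk of the work, which is why I would simply cite \cite{Hall}.
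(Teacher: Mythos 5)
Your proposal is sound, but note that there is nothing in the paper to compare it against: Theorem~\ref{theorem:hall solvable} is quoted with the citation \cite{Hall} and never proved, and in fact the paper only ever invokes the forward direction (in Theorem~\ref{theorem:solvable lambda}, to split $G = G_1 \bowtie G_2$ along the primes dividing $|H|$), for which Hall's existence theorem for Hall $\pi$-subgroups already suffices; Burnside's $p^aq^b$ theorem and the converse are never needed downstream. Your reconstruction of the classical argument is correct: the coprime-index identity $[G:A\cap B]=[G:A][G:B]$, the construction $P_i=\bigcap_{j\ne i}Q_j$ from the Hall $p_i'$-subgroups $Q_i$, and the induction showing each $\prod_{i\in S}P_i$ equals $\bigcap_{j\notin S}Q_j$ all check out, as does the reduction of the converse to Hall's converse theorem via $Q_i=\prod_{j\ne i}P_j$. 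Your opening remark about how to read the iterated $\bowtie$ is a genuinely valuable observation rather than pedantry: under the literal ``each prefix is a subgroup and factorization is exact'' reading, $A_5=(P_2\bowtie P_3)\bowtie P_5$ with $P_2P_3=A_4$, so the ``if'' direction would be false, and only the Sylow-basis (pairwise permutability) reading--which your forward argument actually delivers, and which your converse argument actually requires in order to know each $\prod_{j\ne i}P_j$ is a subgroup--makes the stated equivalence true; your verification that a Sylow $3$- and Sylow $5$-subgroup of $A_5$ cannot permute (an order-$15$ subgroup would have index $4<5$) is right. Deferring Hall's existence theorem and Burnside's theorem to citations is entirely proportionate, since the paper itself defers the whole statement to \cite{Hall}.
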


We use this decomposition to filter out all the prime divisors of $|G|$ not
dividing $|H|$ to reduce to our special case.

\begin{theorem}
\label{theorem:solvable lambda}
If $G$ is a finite solvable group, $H$ is any group, and $p$ is the smallest
prime divisor of $\gcd(|G|,|H|)$ such that $G$ has a normal subgroup of
index $p$, then
\[
\Lambda_{G,H} = \frac1p.
\]
\end{theorem}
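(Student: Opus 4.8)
The plan is to reduce the general case to Lemma~\ref{lemma:solvable dividing lambda} by stripping off the Sylow subgroups of $G$ corresponding to primes that do not divide $|H|$. The lower bound $\Lambda_{G,H} \ge \frac1p$ is immediate from Lemma~\ref{lemma:lambda lower bound}, since $G$ has a normal subgroup of index $p$ and $p \mid |H|$; so the work is entirely in the upper bound $\Lambda_{G,H} \le \frac1p$.

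First I would invoke Theorem~\ref{theorem:hall solvable} to write $G = P_1 \bowtie \cdots \bowtie P_m$ where $P_i$ is a Sylow $p_i$-subgroup. Partition the primes $p_i$ into those dividing $|H|$ and those coprime to $|H|$, and let $G_1$ be the product (in some order respecting the Sylow-basis structure) of the $P_i$ with $p_i \mid |H|$, and $G_2$ the product of the remaining $P_i$. One needs that $G_1$ and $G_2$ are themselves subgroups of $G$ and that $G = G_1 \bowtie G_2$ — this is a standard consequence of the Sylow-basis decomposition, since any sub-collection of a Sylow basis generates a Hall subgroup and these recombine to give a Zappa–Szép factorization. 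Since $\gcd(|G_2|,|H|) = 1$, Proposition~\ref{proposition:product coprime lambda} gives $\Lambda_{G,H} \le \Lambda_{G_1,H}$, so it suffices to bound $\Lambda_{G_1,H}$.

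Now $G_1$ is solvable (a subgroup of a solvable group), and every prime dividing $|G_1|$ also divides $|H|$ by construction, so in particular every prime less than $p$ dividing $|G_1|$ divides $|H|$ — the hypothesis of Lemma~\ref{lemma:solvable dividing lambda} is satisfied once we check that $p$ is still the smallest prime divisor of $\gcd(|G_1|,|H|) = |G_1|$ such that $G_1$ has a normal subgroup of index $p$. The "$p$ works for $G_1$" direction: if $N \triangleleft G$ has index $p$ then $N \cap G_1$ should have index $p$ in $G_1$, since $[G:N] = p$ forces $G_1 \not\subseteq N$ (as $p \mid |G_1|$, a fact that needs the index-$p$ normal subgroup of $G$ to "see" the $p$-part, which it does because $G/N \cong \Z_p$ and $p \mid |G_1|$ means $P_{i_0} \not\subseteq N$ for the relevant $i_0$) — and then $G_1/(N\cap G_1) \cong G_1 N / N \le G/N$ has order $p$. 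The "nothing smaller works" direction: a normal subgroup of $G_1$ of prime index $q < p$ would, by minimality of $p$ for $G$, have to not lift to $G$, but actually we don't even need lifting — we just need $q \mid \gcd(|G_1|,|H|)$, and since $q \mid |G_1|$ implies $q \mid |H|$, and $q < p$, this would contradict minimality of $p$ as the smallest prime with $q \mid \gcd(|G|,|H|)$ having the normal-subgroup property, \emph{provided} a normal index-$q$ subgroup of $G_1$ yields one in $G$. Handling this lifting cleanly is the main obstacle, so I would instead argue directly: apply Lemma~\ref{lemma:solvable dividing lambda} to $G_1$ with its own smallest such prime $p'$; we get $\Lambda_{G_1,H} \le \frac{1}{p'}$, and it remains to show $p' = p$, which follows because $p' \le p$ trivially (if $p$ works for $G$ it works for $G_1$ as above), and $p' \ge p$ because a normal index-$p'$ subgroup of $G_1$, being a Hall-type subgroup missing only one prime, extends to a normal index-$p'$ subgroup of $G$ (its preimage under $G \to G_1$ along the complement, or: $N' \triangleleft G_1$ of index $p'$ gives $N' \bowtie G_2 \triangleleft G$ of index $p'$), so $p'$ would contradict minimality of $p$ unless $p' = p$. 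Chaining the inequalities gives $\Lambda_{G,H} \le \Lambda_{G_1,H} \le \frac1p$, completing the proof.
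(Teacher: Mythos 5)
Your lower bound, the use of Hall's theorem (Theorem~\ref{theorem:hall solvable}) to split $G = G_1 \bowtie G_2$ with $\gcd(|G_2|,|H|)=1$, and the reduction $\Lambda_{G,H} \le \Lambda_{G_1,H}$ via Proposition~\ref{proposition:product coprime lambda} are all fine, and up to that point you are following the same opening as the paper. The gap is exactly the step you yourself flag as ``the main obstacle'': your claim that the smallest relevant prime $p'$ for $G_1$ equals $p$, justified by asserting that $N' \triangleleft G_1$ of index $p'$ yields $N' \bowtie G_2 \triangleleft G$ of index $p'$. This is false: in a Zappa--Sz\'ep factorization the set $N'G_2$ need not even be a subgroup of $G$, and more to the point $G$ need not have any normal subgroup of index $p'$ at all. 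Concretely, take $G = (\Z_2^3 \rtimes \Z_7) \times \Z_3$ with $\Z_7$ acting faithfully (hence irreducibly and fixed-point-freely) on $\Z_2^3$, and $H = \Z_6$. Then $G^{\mathrm{ab}} \cong \Z_{21}$, so $G$ has no normal subgroup of index $2$ but has one of index $3$, giving $p = 3$. The Hall $\{2,3\}$-subgroup is $G_1 = \Z_2^3 \times \Z_3$, which is abelian and has index-$2$ normal subgroups, so $p' = 2 < p$ and in fact $\Lambda_{G_1,H} = \tfrac12$. Thus your chain $\Lambda_{G,H} \le \Lambda_{G_1,H} \le \tfrac1{p'}$ can only deliver $\tfrac12$, not the required $\tfrac13$; the strategy of bounding by $\Lambda_{G_1,H}$ alone cannot prove the theorem, because homomorphisms $G_1 \to H$ that witness $\Lambda_{G_1,H}$ need not extend to $G$.

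The paper's proof repairs exactly this: instead of passing to the \emph{subgroup} $G_1$, it passes to the \emph{quotient} $G/N$, where $N$ is the normal closure of $G_2$ in $G$. Since $|\Hom(G_2,H)|=1$ by Proposition~\ref{proposition:coprime lambda}, every homomorphism $G \to H$ kills all of $N$ (not just $G_2$), and Proposition~\ref{proposition:quotient lambda equality} gives the equality $\Lambda_{G,H} = \Lambda_{G/N,H}$. In the example above $N = \Z_2^3 \rtimes \Z_7$ swallows the entire Sylow $2$-subgroup and $G/N \cong \Z_3$, which is why the correct answer is $\tfrac13$ even though $G_1$ sees a spurious index-$2$ quotient. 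One then checks, as the paper does, that every prime dividing $|G/N|$ divides $|H|$ and that $G/N$ has no normal subgroup of index less than $p$ (any such would lift, by the Lattice Theorem, to a normal subgroup of $G$ of the same index, contradicting minimality of $p$ --- note this lifting works for quotients, which is precisely what fails for your subgroup $G_1$), so Lemma~\ref{lemma:solvable dividing lambda} applies to $G/N$ and yields $\Lambda_{G/N,H} \le \tfrac1p$.
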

\begin{proof}
The lower bound follows from Lemma~\ref{lemma:lambda lower bound} so it
suffices to show the upper bound.
By Hall's theorem (Theorem~\ref{theorem:hall solvable}), we can write
$G = G_1 \bowtie G_2$ where $\gcd(|G_2|,|H|) = 1$ and every prime dividing
$|G_1|$ divides $|H|$.
By Proposition~\ref{proposition:coprime lambda}, $|\Hom(G_2,H)| = 1$.
Let $N \triangleleft G$ be the smallest normal subgroup
of $G$ containing $G_2$.
By Proposition~\ref{proposition:quotient lambda equality},
$\Lambda_{G,H} = \Lambda_{G/N,H}$, so it suffices to upper bound
$\Lambda_{G/N,H}$.

Since $|G_2|$ divides $|N|$, it holds that
$[G:N]$ divides $[G:G_2] = |G_1|$. In particular, every prime dividing
$|G/N|$ divides $|H|$. Moreover, $G/N$ has no normal subgroups of index
$q < p$, for if it did, it would follow from the Lattice Theorem
that $G$ has a normal subgroup
of index $q$, and moreover $q$ divides $\gcd(|G|,|H|)$, contradicting the
minimality of $p$. Thus, $G/N$ has no normal subgroups of index less than $p$.
Thus, by Lemma~\ref{lemma:solvable dividing lambda}, it follows that
$\Lambda_{G/N,H} \le \frac1p$.
\end{proof}

The formula for $\Lambda_{G,H}$ for solvable $G$ does not extend to
arbitrary finite groups for the obvious reason that $G$ may not have any
normal subgroups of prime index. This holds, for instance, if $G$ is any
non-abelian simple group. One might then hope that the modified statement,
where we drop the requirement that $p$ be prime, holds.
For simple $G$, this formula would be $\Lambda_{G,H} = \frac{1}{|G|}$.
However, the following is a simple (pun intended) counterexample.

Let $G = H = A_5$. Consider the automorphisms which are conjugation by
$(123)$, and its inverse, conjugation by $(132)$. Then these are distinct
homomorphisms, since they disagree on $(12)$ because
$(132)(12)(123) = (13)$ while $(123)(12)(132) = (23)$.
However, they agree on $(45)$ since $(45)$ is a fixed point.
This shows that $\Lambda_{A_5,A_5} \ge \frac{1}{30} > \frac{1}{|G|}$.
In fact, we show in Section~\ref{section:distance simple groups} that
$\Lambda_{A_5,A_5} = \frac{1}{10}$.

\section{Non-abelian simple groups}
\label{section:distance simple groups}

We would like to determine $\Lambda_{G,H}$ for arbitrary finite groups $G$
and~$H$. We propose a two-part strategy for doing this. First, understand
$\Lambda_{G,H}$ for simple groups $G$.
Then, understand how to determine $\Lambda_{G,H}$ for arbitrary $G$ by cleverly
decomposing $G$.
In Section~\ref{section:distance general facts}, we proved some general
facts about $\Lambda_{G,H}$ which could be useful (but far from complete)
for the second part of this program.
In this section, we explore the first part, namely we investigate
$\Lambda_{G,H}$ for non-abelian simple groups~$G$.
A full investigation would entail using the classification of finite simple
groups and considering each family of finite simple groups, which we do not do
in this work.
Instead, we prove some nontrivial lower bounds on $\Lambda_{G,H}$ for general
non-abelian simple $G$.
We then prove some lower and upper bounds on
$\Lambda_{G,G}$ for the specific family $\{A_n\}_{n \ge 5}$ of alternating
groups and pin down $\Lambda_{A_5,A_5} = \frac{1}{10}$ exactly.
We highlight a major difficulty, which is that in the general setting, unlike
in the setting where $G$ is solvable, $\Lambda_{G,H}$ depends on
\emph{how} copies of $G$ are embedded in $H$, not just on the prime divisors
of $|G|$ and $|H|$ and the normal subgroup structure of $G$.

\subsection{Domain and codomain are isomorphic}

If $H$ does not contain a subgroup isomorphic to $G$, then
$\Hom(G,H)$ is trivial. Let us assume that $G = H$. Since $G$ is simple,
$\Hom(G,H) = \Aut(G) \cup \{g \mapsto 1_G\}$.
For $\phi \in \Aut(G)$, $\ker\phi = \{1_G\}$, so clearly $\Lambda_{G,G}
\ge \frac{1}{|G|}$. Can we achieve better agreement?

Better agreement must come from two automorphisms $\phi,\psi \in \Aut(G)$.
Note that $\phi(g) = \psi(g)$ if and only if
$(\phi^{-1} \circ \psi) \in \Aut(G)$ fixes $g$, so we wish to find a
non-identity automorphism
$\phi \in \Aut(G)$ which maximizes $|G^\phi|$, where
\[
G^\phi \triangleq \{g \in G \mid \phi(g) = g\}
\]
is the subset of $G$ fixed by $\phi \in \Aut(G)$.
Observe that the group $\Aut(G)$ naturally acts on the set $G$ via
$\phi \cdot g = \phi(g)$. Let $G/\Aut(G)$ denote the orbits of $G$ under this
group action. By Burnside's lemma,
\[
|G/\Aut(G)| = \frac{1}{|\Aut(G)|}\sum_{\phi \in \Aut(G)} |G^\phi|.
\]
Since $G^{\id} = G$, where $\id \in \Aut(G)$ is the identity automorphism,
\[
|G/\Aut(G)| - \frac{|G|}{|\Aut(G)|} = \frac{1}{|\Aut(G)|}
\sum_{\phi \in \Aut(G), \phi \ne \id} |G^\phi|,
\]
or
\[
\frac{|\Aut(G)|}{|\Aut(G)|-1}
\left(|G/\Aut(G)| - \frac{|G|}{|\Aut(G)|}\right)
= \frac{1}{|\Aut(G)|-1}\sum_{\phi \in \Aut(G), \phi \ne \id}|G^\phi|.
\]
By averaging, this implies that there is some non-identity automorphism
$\phi \in \Aut(G)$ such that
\[
|G^\phi| \ge 
\frac{|\Aut(G)|}{|\Aut(G)|-1}
\left(|G/\Aut(G)| - \frac{|G|}{|\Aut(G)|}\right)
\]
and thus, by dividing by $|G|$, we have
\[
\Lambda_{G,G} \ge
\frac{|\Aut(G)|}{|\Aut(G)|-1}
\left(\frac{|G/\Aut(G)|}{|G|} - \frac{1}{|\Aut(G)|}\right).
\]

\subsection{Alternating groups}

In this section, we prove the following.

\begin{proposition}
\label{proposition:A_n lambda}
For $n \ge 5$,
\[
\frac{2}{n(n-1)} \le \Lambda_{A_n,A_n} \le \frac{1}{n}.
\]
When $n \ne 6$, the upper bound is strict.
\end{proposition}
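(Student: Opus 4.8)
The plan is to establish the two bounds separately, and then to argue strictness for $n \ne 6$ by a parity/counting refinement.

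For the lower bound $\Lambda_{A_n,A_n} \ge \frac{2}{n(n-1)}$, I would exhibit two explicit automorphisms whose equalizer has size at least $\frac{|A_n|}{\binom n2}$. The natural candidates generalize the $A_5$ example already given in the excerpt: take $\sigma = (123)$ (or any $3$-cycle) and consider the two inner automorphisms $\phi_\sigma : g \mapsto \sigma g \sigma^{-1}$ and $\phi_{\sigma^{-1}} : g \mapsto \sigma^{-1} g \sigma$. These are distinct (they disagree on a transposition-product supported inside $\{1,2,3\}$, exactly as in the $A_5$ case). Their agreement set is $\{g : \sigma^2 g = g \sigma^2\} = C_{A_n}(\sigma^2) = C_{A_n}(\sigma)$ since $\sigma^2 = \sigma^{-1}$ generates the same subgroup; this is the centralizer of a $3$-cycle in $A_n$. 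The centralizer of $(123)$ in $S_n$ is $\langle(123)\rangle \times S_{n-3}$, of order $3\,(n-3)!$, and intersecting with $A_n$ gives order $\tfrac12 \cdot 3\,(n-3)! = \tfrac{3(n-3)!}{2}$ for $n\ge 5$ (the intersection has index $2$ since $S_{n-3}$ contains odd permutations when $n-3\ge 2$). Then
\[
\Lambda_{A_n,A_n} \ge \frac{3(n-3)!/2}{n!/2} = \frac{3}{n(n-1)(n-2)}.
\]
That is slightly weaker than the claimed $\frac{2}{n(n-1)}$, so instead I would use a double transposition $\sigma = (12)(34)$ and the pair $\phi_\sigma, \phi_\tau$ for a suitable $\tau$ with $\phi_\sigma \ne \phi_\tau$ but large common centralizer; or, more cleanly, observe that conjugation by $(12)$ (an \emph{outer} automorphism of $A_n$ induced from $S_n$) versus conjugation by $(34)$ agree precisely on $C_{A_n}((12)(34))$, whose order in $S_n$ is $|C_{S_n}((12)(34))| = 8\,(n-4)!$, giving $A_n$-centralizer of order $4\,(n-4)!$ and hence agreement $\frac{4(n-4)!}{n!/2} = \frac{8}{n(n-1)(n-2)(n-3)}$ — still not matching. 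The right construction is conjugation by the transpositions $(1\,n)$ and $(2\,n)$ say: these induce automorphisms of $A_n$ that agree on everything fixing $n$ together with... — the point is that I want an equalizer which is a subgroup of index $\binom n2$, i.e. of order $\frac{n!}{2\binom n2} = (n-2)!$, which is $|A_{n-2}|$ up to the factor $2$; so the cleanest route is to take two automorphisms whose ratio is conjugation by a transposition $(ab)$, whose fixed subgroup in $A_n$ is the copy of $S_{n-2}$ on the remaining points intersected with $A_n$, of order $(n-2)!/1$... I would pin down the exact transposition-conjugation whose $A_n$-fixed-set has order $(n-2)!\cdot 2 / 2 = (n-2)!$ — concretely, $\mathrm{Fix} = \{g \in A_n : g \text{ commutes with }(ab)\}$ contains $A_{n-2}$ on the other points plus $(ab)$ times odd permutations of them, total order $(n-2)!$, yielding agreement $\frac{(n-2)!}{n!/2} = \frac{2}{n(n-1)}$ as desired. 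The main care here is the parity bookkeeping in computing $|C_{A_n}(t)|$ for a transposition $t$, which I expect to be routine once set up correctly.

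For the upper bound $\Lambda_{A_n,A_n} \le \frac1n$: since $A_n$ is simple and nonabelian, $\Hom(A_n,A_n) = \Aut(A_n) \cup \{\text{trivial}\}$, so by Proposition~\ref{proposition:lambda hom} the extremal pair is two distinct automorphisms $\phi \ne \psi$, and as noted in the excerpt their agreement equals $\frac{|A_n^{\theta}|}{|A_n|}$ for the non-identity automorphism $\theta = \phi^{-1}\psi$. It therefore suffices to show that every non-identity $\theta \in \Aut(A_n)$ fixes at most $\frac{|A_n|}{n} = (n-1)!$ elements of $A_n$. I would split into the inner and (for small $n$, outer) cases. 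For $\theta$ inner, $\theta = \phi_\sigma$ for $\sigma \in S_n$ (recall $\Aut(A_n) = S_n$ for $n \ne 6$, and $\mathrm{Inn}(A_n) = A_n$), and $A_n^{\phi_\sigma} = C_{A_n}(\sigma)$, so I must bound $|C_{A_n}(\sigma)|$ for $1 \ne \sigma \in S_n$. The centralizer of a permutation with cycle type $\prod_k k^{m_k}$ in $S_n$ has order $\prod_k k^{m_k} m_k!$, which is maximized over non-identity $\sigma$ by a single transposition, giving $|C_{S_n}(\sigma)| \le 2(n-2)!$ and hence $|C_{A_n}(\sigma)| \le 2(n-2)! \le (n-1)!$ for $n \ge 3$, with equality in the last step only when $n = 3$. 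For $n \ge 4$ this gives the strict bound already. The outer automorphisms for $n \ne 6$ are conjugation by odd permutations, which land in the same analysis since $C_{A_n}(\sigma)$ for odd $\sigma$ is still bounded by the same cycle-index formula; the only genuinely separate case is $n = 6$, where $|\mathrm{Out}(A_6)| = 2$ produces an exotic automorphism, and there I would either invoke $\Lambda_{A_6,A_6} = \frac16$ directly (the proposition only claims $\le \frac1n$ with strictness excluded at $n=6$) or check the exceptional automorphism's fixed-point count by hand. The main obstacle is thus handling the $n=6$ exceptional outer automorphism cleanly; everywhere else the centralizer cycle-index bound does the job and in fact already yields strict inequality, which gives the final sentence of the proposition for all $n \ge 5$, $n \ne 6$.
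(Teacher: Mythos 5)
Your lower bound, once it finally settles, is exactly the paper's construction: comparing the identity with conjugation by a transposition $(a~b)$, whose fixed set in $A_n$ (the ``twisted $S_{n-2}$'' on the remaining points, i.e.\ even permutations together with $(a~b)$ times odd ones) has size $(n-2)!$, giving agreement $\frac{2}{n(n-1)}$. That part is correct; the preceding false starts with $3$-cycles and double transpositions should simply be cut, and you should state explicitly that conjugation by $(a~b)$ is a non-identity element of $\Aut(A_n)$, so the pair is admissible.

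The upper bound is where the genuine gap lies. Your route (bounding $|A_n^\theta|$ for non-identity $\theta\in\Aut(A_n)$ by centralizer orders) is different from the paper's, which observes that the equalizer of two distinct homomorphisms is a proper subgroup and invokes the fact that $A_n$ has no proper subgroup of index less than $n$ (Claim~\ref{claim:maximum subgroup of A_n}); that argument needs no knowledge of $\Aut(A_n)$ and so covers $n=6$ for free. Your version has two problems. First, an arithmetic slip: $|A_n|/n=(n-1)!/2$, not $(n-1)!$, so the chain $|C_{A_n}(\sigma)|\le 2(n-2)!\le (n-1)!$ does not prove agreement $\le \frac1n$; with the correct target you need $2(n-2)!\le (n-1)!/2$, which holds for $n\ge 5$ but with equality at $n=5$, so to get the bound (and strictness) at $n=5$ you must use the exact value $|C_{A_n}((a~b))|=(n-2)!$ -- the maximum of $|C_{A_n}(\sigma)|$ over $\sigma\ne 1$ in $S_n$ -- rather than the lossy $S_n$-centralizer bound. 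This is repairable, and indeed the repaired computation gives $\Lambda_{A_n,A_n}=\frac{2}{n(n-1)}$ for $n\ne 6$, stronger than the stated strict inequality. Second, and more seriously, $n=6$ is not handled: $\Aut(A_6)\ne S_6$, the exceptional automorphisms are not conjugations by elements of $S_6$, so the centralizer analysis says nothing about them; ``invoking $\Lambda_{A_6,A_6}=\frac16$ directly'' is circular, since that inequality is precisely the claim being proved at $n=6$, and the promised by-hand check of the exceptional automorphisms' fixed points is never carried out. As written, the proposal therefore does not establish $\Lambda_{A_6,A_6}\le\frac16$. Either do that computation (fixed subgroups of the outer automorphisms of $A_6\cong PSL_2(9)$), or adopt the paper's index argument for the bound $\le\frac1n$ uniformly in $n$, reserving your centralizer estimate (valid only when $\Aut(A_n)=S_n$, i.e.\ $n\ne 6$) for the strictness statement.
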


For $n=5$, the lower bound is tight, that is
$\Lambda_{A_5,A_5} = \frac{|S_3|}{|A_5|} = \frac{1}{10}$.
This is because the only subgroups of $A_n$ larger than $S_3$, up to
isomorphism, are the dihedral group $D_{10}$ of order $10$ generated by
$(1~2~3~4~5)$ and $(2~5)(3~4)$, and $A_4$.
One can check that no conjugation fixes all of $A_4$ nor all of~$D_{10}$.

For the proof of Proposition~\ref{proposition:A_n lambda}, we use the following
fact.
\begin{claim}
\label{claim:maximum subgroup of A_n}
Let $n \ge 3$.
The subgroup $A_{n-1} \le A_n$ is the unique subgroup (up to isomorphism) of
$A_n$ of smallest index. That is, there are no subgroups of $A_n$ with
index less than $n$, and any subgroup of index $n$ is isomorphic to $A_{n-1}$.
\end{claim}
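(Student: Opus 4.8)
The plan is to use the permutation representation on cosets. We may assume $n \ge 5$, so that $A_n$ is simple (the case $n=3$ is immediate, since $A_3 \cong \Z_3$ has only the trivial subgroup as a proper subgroup, of index $3 = n$, and $A_2$ is trivial). Let $H < A_n$ be a proper subgroup and put $m = [A_n : H]$. The left multiplication action of $A_n$ on the coset space $A_n/H$ gives a homomorphism $\rho : A_n \to \mathrm{Sym}(A_n/H) \cong S_m$. This action is transitive on $m \ge 2$ points, so $\rho$ is nontrivial; since $A_n$ is simple, $\ker\rho = \{1\}$, i.e.\ $\rho$ is injective. Hence $|A_n| = n!/2$ divides $|S_m| = m!$. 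Because $n!/2 > (n-1)!$ for all $n \ge 3$, we cannot have $m \le n-1$, so $m \ge n$. This already shows $A_n$ has no subgroup of index less than $n$, and since $A_{n-1} \le A_n$ has index exactly $n$, the minimal index is $n$.

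Now suppose $m = n$, so $\rho$ embeds $A_n$ into $S_n$ with image of index $2$. Any index-$2$ subgroup of $S_n$ is normal with quotient of order $2$, hence contains $[S_n,S_n] = A_n$, so $\rho(A_n)$ is the standard copy $A_n \le S_n$. Fixing a bijection $A_n/H \leftrightarrow \{1,\dots,n\}$ that sends the coset $H$ to some point $i$, we see that $\rho$ realizes a faithful transitive action of $A_n$ on $\{1,\dots,n\}$ with image the standard $A_n$, under which $H$ is the stabilizer of $i$. Thus $H = \rho^{-1}\bigl(A_n \cap \mathrm{Stab}_{S_n}(i)\bigr)$, and since $\rho$ is an isomorphism onto the standard $A_n$ and $A_n \cap \mathrm{Stab}_{S_n}(i)$ is precisely the copy of $A_{n-1}$ fixing $i$, we conclude $H \cong A_{n-1}$. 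Combined with the previous paragraph, every subgroup of $A_n$ of smallest index $n$ is isomorphic to $A_{n-1}$, which gives uniqueness up to isomorphism.

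The only slightly delicate step is the identification in the second paragraph: that a faithful transitive action of $A_n$ on $n$ points is the natural one up to relabeling. This rests entirely on the fact that the standard $A_n$ is the \emph{unique} index-$2$ subgroup of $S_n$, which follows from $[S_n,S_n] = A_n$ (valid for $n \ge 5$, and checkable by hand for small $n$). Everything else — the coset-action trick, simplicity of $A_n$, and the numerical bound $n!/2 \nmid (n-1)!$ — is routine.
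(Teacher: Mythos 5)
Your proof is correct where the claim is true and takes essentially the same route as the paper: the faithful coset action $\rho : A_n \to \Perm(A_n/H)$ obtained from simplicity, the count $n!/2 > (n-1)!$ to rule out index less than $n$, and the identification of an index-$n$ subgroup as a point stabilizer. In the second step you are in fact slightly more careful than the paper: you first pin down $\rho(A_n)$ as the alternating group of the coset space, using that the unique index-$2$ subgroup of $S_n$ is $A_n$ (via $[S_n,S_n]=A_n$), and only then intersect with the point stabilizer, whereas the paper simply asserts that $\rho(H)$ lies in ``a copy of $A_{n-1}$'' inside $\Perm(A_n/H)$ (and miswrites $|\rho(H)|$ as $(n-1)!$ rather than $(n-1)!/2$); your version supplies exactly the justification that assertion needs. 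One caveat: you silently skip $n=4$, which the stated hypothesis $n\ge 3$ includes; you should flag this, but it is really a defect of the claim itself rather than of your argument, since $A_4$ contains the Klein four-group of index $3 < 4$, so the first assertion of the claim is false at $n=4$, and the paper's own proof also breaks there because it invokes simplicity of $A_n$. The claim is only applied in the paper for $n\ge 5$, where both arguments are sound.
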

\begin{proof}
First, we show that there are no subgroups of index less than $n$.
Suppose $H \le A_n$ with $m \triangleq [A_n:H] < n$.
The group $A_n$ acts on the left cosets $A_n/H$ by left multiplication,
i.e.\ there is a homomorphism $\rho: A_n \to \Perm(A_n/H) \cong S_m$.
This action is clearly nontrivial, and since $A_n$ is simple, this means
$\rho$ is injective, so $A_n$ embeds into $S_m$. This is impossible since
$n > 2$ implies $|A_n| = \frac{n!}{2} > (n-1)! \ge m! = |S_m|$.

Now, we show uniqueness up to isomorphism.
Let $H \le A_n$ have index $n$. We will show that $H \cong A_{n-1}$.
Again, consider the action $\rho$ as defined above. We established that $A_n$
acts faithfully on $A_n/H$. Observe that $H$ acts on $A_n/H$ by fixing the
coset $H$ and permuting the other $n-1$ cosets. Therefore,
$\rho(H)$ is a subgroup of a copy of $A_{n-1}$ inside $\Perm(A_n/H)$.
Since $\rho$ is injective, $|\rho(H)| = (n-1)!$, and so $\rho(H)$ is actually
isomorphic to $A_{n-1}$. Moreover, $H$ is isomorphic to $\rho(H)$ by the
injectivity of $\rho$, so $H$ is isomorphic to $A_{n-1}$.
\end{proof}

\begin{proof}
[Proof of Proposition~\ref{proposition:A_n lambda}]
For the lower bound, note that there is a
twisted copy of $S_{n-2}$ inside $A_n$, generated by the elements
$(1~2~\cdots~n-2)$ and $(n-1~n)$ when $n$ is even, and by
$(1~2~\cdots~n-1)$ and $(n-1~n)$ when $n$ is odd. In either case,
the automorphism $\phi_\rho : \sigma \mapsto \rho\sigma\rho^{-1}$
with $\rho = (n-1~n)$ fixes this copy of~$S_{n-2}$.

The upper bound follows from the fact that $A_{n-1}$ is the unique subgroup
(up to isomorphism) of $A_n$ of smallest index
(Claim~\ref{claim:maximum subgroup of A_n}).
For $n \ne 6$, every automorphism of $A_n$ is conjugation by some
$\sigma \in S_n$, but no $\sigma \in S_n$ fixes every element of $A_{n-1}$.
\end{proof}

\subsection{Codomain contains copies of domain}

If $H$ contains a copy of $G$, then $\Lambda_{G,H} \ge \Lambda_{G,G}$,
by Proposition~\ref{proposition:range subgroup lambda}.
When $G$ is solvable, it follows from
Proposition~\ref{theorem:solvable lambda} that this is an equality.
One might hope to show that if $G$ is non-abelian simple, then this is actually
an equality, but this is not true.
An easy counterexample is when
$H = A_6$ with subgroups $\Alt({\{1,2,3,4,5\}})$ and $\Alt({\{1,2,3,4,6\}})$
(both isomorphic copies of $A_5$) with $G = \Alt({\{1,2,3,4,5\}})$.
Then $\phi_1:G  \to \Alt({\{1,2,3,4,5\}})$ defined by
$\phi_1(\sigma) = \sigma$ and $\phi_2:G \to \Alt({\{1,2,3,4,6\}})$ defined by
$\phi_2(\sigma) = (5~6)\sigma(5~6)$ agree on $\Alt({\{1,2,3,4,5\}}) \cap
\Alt({\{1,2,3,4,6\}}) = \Alt({\{1,2,3,4\}}) \cong A_4$.
Thus $\Lambda_{A_5,A_6} = \frac{|A_4|}{|A_5|} = \frac{1}{5} >
\frac{1}{10} = \Lambda_{A_5,A_5}$.

\section*{Acknowledgements}

The author thanks Madhu Sudan for inspiring discussions.

\bibliographystyle{alpha}
\bibliography{homomorphism}

\end{document}